\newcommand{\ot}[0]{\otimes}
\newcommand{\ensm}[1]{\ensuremath{#1}}
\newcommand{\bra}[1]{\ensm{\langle #1|}}
\newcommand{\ket}[1]{\ensm{| #1 \rangle}}
\newcommand{\be}{\begin{equation}}
\newcommand{\ee}{\end{equation}}
\newcommand{\bea}{\begin{eqnarray}}
\newcommand{\eea}{\end{eqnarray}}
\newcommand{\ev}[1]{\langle #1 \rangle}
\newcommand\tr{\mbox{tr}}
\newcommand\one{\mathbb{I}}
\def\one{\mathbbm{1}}
\def\H{{\cal H}}
\begin{document}

\title*{Guess your neighbour's input: no quantum advantage but an advantage for quantum theory}
\titlerunning{Guess your neighbour's input ...}

\author{Antonio Ac\'in, Mafalda L. Almeida, Remigiusz Augusiak, and Nicolas Brunner}
\authorrunning{A. Ac\'in, M. L. Almeida, R. Augusiak, and N. Brunner}

\institute{A. Ac\'in \at ICFO--Institut de Ci\`encies
Fot\`oniques, 08860 Castelldefels (Barcelona), Spain and
ICREA--Instituci\'o Catalana de Recerca i Estudis Avan\c{c}ats,
08010 Barcelona, Spain, \email{antonio.acin@icfo.es} \and M. L.
Almeida \at Centre for Quantum Technologies, National University
of Singapore, 3 Science drive 2, Singapore 117543,
\email{cqtmla@nus.edu.sg} \and R. Augusiak \at ICFO--Institut de
Ci\`encies Fot\`oniques, 08860 Castelldefels (Barcelona), Spain,
\email{remigiusz.augusiak@icfo.es} \and N. Brunner \at H.H. Wills
Physics Laboratory, University of Bristol, Bristol, BS8 1TL,
United Kingdom, \email{N.Brunner@bristol.ac.uk}}

\maketitle

\abstract{Quantum mechanics dramatically differs from classical
physics, allowing for a wide range of genuinely quantum phenomena.
The goal of quantum information is to understand information
processing from a quantum perspective. In this mindset, it is thus
natural to focus on tasks where quantum resources provide an
advantage over classical ones, and to overlook tasks where quantum
mechanics provides no advantage. But are the latter tasks really
useless from a more general perspective? Here we discuss a simple
information-theoretic game called 'guess your neighbour's input',
for which classical and quantum players perform equally well. We
will see that this seemingly innocuous game turns out to be useful
in various contexts. From a fundamental point of view, the game
provides a sharp separation between quantum mechanics and other
more general physical theories, hence bringing a deeper
understanding of the foundations of quantum mechanics. The game
also finds unexpected applications in quantum foundations and
quantum information theory, related to Gleason's theorem, and to
bound entanglement and unextendible product bases.}

\section{Introduction}
Quantum theory is arguably the most accurate scientific theory
designed so far. However, despite this success, we still lack a
deep understanding of the foundations of the theory. An important
goal in the foundations of quantum mechanics is therefore to
recover quantum theory from alternative sets of axioms, motivated
by physical principles rather than mathematical ones \cite{hardy}.

In particular, a case that attracted considerable attention
recently is that of quantum nonlocal correlations. Quantum
nonlocality \cite{bell}, a valuable resource for information
processing \cite{ekert,CC,diqkd,dirng}, is the strongest
manifestation of quantum correlations; distant observers
performing local measurements on a shared entangled state, may
observe correlations between their measurement outcomes which
could provably not been obtained in any local theory. The strength
of quantum correlations appears however to be limited, in a way
that cannot be yet explained by any physical principle. Consider
for instance the principle stating that information cannot be
transmitted instantaneously, the so-called no-signaling principle.
Although this principle is satisfied by all quantum correlations,
preventing from a direct conflict with relativity, it does not
single out quantum correlations. Indeed there exist no-signaling
correlations which are stronger than those allowed in quantum
mechanics \cite{PR}, usually referred to as super-quantum
correlations. Why such correlations would be unlikely to exist in
nature and whether there exist a physical principle singling out
quantum correlations are important issues in the foundations of
quantum mechanics \cite{brassardNP,popescuNP,bub}.

Several approaches have been investigated to discuss this problem.
The first consists in investigating the capabilities for
information processing of super-quantum correlations, and to
compare them with that of quantum correlations. Interestingly it
was shown that the availability of certain super-quantum
correlations, instead of quantum correlations, would tremendously
increase the communication power of classical communication. In
particular, it was shown that some of them would collapse
communication complexity \cite{Dam2005,Brassard2006,Brunner2009}
(hence dramatically reducing the amount of classical communication
required to solve a large class of problems \cite{CC}) or violate
the principles of information causality
\cite{Pawlowski2009a,Allcock2009} and macroscopic locality
\cite{ML}. A second approach, perhaps less demanding, starts from
assuming 'local quantum mechanics'. In other words the statistics
of local measurements are assumed to follow Born's rule. What
other principle should then be imposed in order for the global
statistics to be quantum? In the bipartite case, it is proven that
the no-signaling principle is enough to single out quantum
correlations \cite{Barnum,Universal}. Importantly, while both of
these approaches have proven to be (at least partially) successful
in the case of two parties, none of them can tackle the general
multipartite scenario.

Here we present 'Guess your neighbour's input' (GYNI) \cite{GYNI},
a simple multipartite game, the rules of which can be understood
intuitively from its name. Despite its innocuous appearance, the
game captures crucial features of multipartite quantum
correlations. The main aspect of the game is the following.
Whereas players sharing quantum resources do not have any
advantage over players sharing classical resources, it turns out
that players sharing super-quantum correlations have an advantage
over players sharing either classical or quantum resources. In
other words, the limitation of quantum resources is here not a
mere consequence of the no-signaling principle. Hence, the game of
GYNI provides a natural separation between quantum and
super-quantum correlations. More generally these results point
towards a strengthening of the no-signaling principle, in the
general multipartite case, obeyed by quantum mechanics. Therefore,
whereas the game of GYNI may seem a priori useless from a quantum
perspective, it does in fact bring a novel and fresh perspective
on the foundations of quantum theory \cite{andreas}.

Although it is not clear yet what fundamental principle lies
behind the quantum limitations for GYNI, several important
features of such a principle can already be identified. In
particular, this principle must be genuinely multipartite, which
can be shown directly from the GYNI game. This is because there
exist multipartite super-quantum correlations, that will
nevertheless satisfy any bipartite principle \cite{gwan} (see also
\cite{singapore}). Hence quantum correlations can provably not be
recovered from any principle that is inherently bipartite (such as
no trivial communication complexity or information causality).

Moreover, it can be shown, using GYNI, that there exist
multipartite super-quantum correlations obeying the Born rule
locally \cite{Universal}. Therefore, in the multipartite case, the
no-signaling principle is not enough to recover quantum
correlations from local quantum mechanics. This result also has
fundamental consequences on extensions of Gleason's theorem
\cite{gleason} to composite systems.

Finally, we shall see that GYNI has also applications beyond
quantum foundations. In particular, the game turns out to be
strongly related \cite{BellUPB1,BellUPB2} to topics of quantum
information theory, namely bound entanglement
\cite{HorodeckiBound} and unextendible product bases
\cite{BennettUPB}. This is surprising since these subjects seem to
be completely unconnected at first sight. This connection deepens
our understanding of Bell inequalities with no quantum advantage.
In particular it allows us to derive such inequalities from
unextendible product bases.

This chapter is structured as follows. In Section \ref{sec:gyni},
after giving a brief background introduction to nonlocal
correlations, we present the GYNI game and derive the winning
probabilities for various types of correlations (local, quantum,
and no-signaling). Applications of GYNI are presented in Sections
\ref{sec:gleason} and \ref{sec:qcorr}. First, in Section
\ref{sec:gleason}, we discuss results on the extension of
Gleason's theorem for composite systems. Then, in Section
\ref{sec:qcorr}, we shall see that any information-theoretic
principle catpuring quantum correlations must be genuinuely
multipartite. In Section \ref{sec:upb}, after presenting in detail
the connection between GYNI and unextendible product bases, we
will make use of this connection to go beyond GYNI, and to better
understand the structure of Bell inequalities with no quantum
advantage. Finally, we will conlude in Section
\ref{sec:conclusion}.

\section{Guess your neighbour's input}
\label{sec:gyni}

\subsection{Background: classical, quantum and no-signalling correlations}

The definition of (non)locality was introduced by Bell, as a
rigorous physical and mathematical framework to test the
Einstein-Podolsky-Rosen paradox. Consider two distant observers,
Alice and Bob, sharing a physical system, and performing local
measurements on their subsystems. Alice and Bob's choice of
observables are labeled by $x_1$ and $x_2$ respectively, and take
outcomes $a_1$ and $a_2$ (hereafter the subscripts will be
omitted). The joint probability distribution of outcomes,
conditioned on the choice of observables, is represented by
$P(a_1,a_2|x_1,x_2)$. This set of data is described as
\textit{local} (or classical) if and only if $P(a_1,a_2|x_1,x_2)$
can be reproduced by a local hidden-variable model\footnote{By
simplicity, we consider $\lambda$ to be discrete, but all the
formulation can be extended to the continuous case.}, that is, iff
it can be written in the form
\begin{equation}\label{localcorr}
P_L(a_1,a_2|x_1,x_2)=\sum_\lambda P(\lambda) P(a_1|x_1,\lambda)P(a_2|x_2,\lambda)\,.
\end{equation}
Here individual outcomes are completely specified by the choice of
local observable and the shared (hidden) variable $\lambda$.
Indeed, Alice and Bob's outcomes may be correlated via the
hidden-variable $\lambda$, which is distributed with probability
density $P(\lambda)$.

The probability distribution $P(a_1,a_2|x_1,x_2)$ is said realizable in quantum mechanics (or in short, to be quantum) if and only if it can be written in the following form:
\begin{equation}\label{quantumcorr}
P_Q(a_1,a_2|x_1,x_2)=\tr(\rho_{AB}M_{a_1}^{x_1}\otimes M_{a_2}^{x_2}),
\end{equation}
where the state of system $\rho_{AB}$ is defined by a density
operator on the joint Hilbert space $\H_A\otimes \H_B$, and
$M_{a_1}^{x_1}$, $M_{a_2}^{x_2}$ are the local generalized
measurements (positive semidefinite operators on the local Hilbert
space such that $\sum_{a_j} M_{a_j}^{x_j}=\one$ $(j=1,2)$ with
$\mathbbm{1}$ denoting an identity matrix of the dimension
following from the context). Indeed quantum correlations are
stronger than classical ones, hence there exist quantum
distributions $P_Q(a_1,a_2|x_1,x_2)$ which cannot be written in
the form \eqref{localcorr}.

A crucial feature of both classical and quantum correlations is
that they satisfy the no-signalling principle: instantaneous
information transmission is impossible. More formally the
principle says that Alice's measurement outcome is uncorrelated to
Bob's choice of measurement, that is
\begin{equation}\label{nscorr}
\underset{x_2,x_2'}{\displaystyle\forall}\quad \sum_{a_2} P_{NS}(a_1,a_2|x_1,x_2)= \sum_{a_2} P_{NS}(a_1,a_2|x_1,x_2')\equiv P_A(a_1|x_1).
\end{equation}
Similar equations must be obeyed for Bob's marginal distribution.
Correlations satisfying this principle, as well as normalization
and positivity, are referred to as nonsignaling correlations
\cite{barrett}. Interestingly, there exist nonsignaling
correlations that are not quantum \cite{PR}, i.e. cannot be
written in the form \eqref{quantumcorr}.

The above definitions are naturally generalized to the
multipartite case: local correlations between $N$ parties are
described by
\begin{equation}\label{mlocalcorr}
P_L(a_1,\ldots,a_N | x_1,\ldots,x_N)=\sum_\lambda P(\lambda) P(a_1|x_1,\lambda)P(a_2|x_2,\lambda)\ldots P(a_N|x_N,\lambda),
\end{equation}
Quantum correlations are given by
\begin{equation}\label{quantumcorr}
P_Q(a_1,...,a_N|x_1,...,x_N)=\tr(\rho M_{a_1}^{x_1}\otimes ... \otimes M_{a_N}^{x_N}),
\end{equation}
where $\rho$ denotes the quantum state shared between the parties.
Finally nonsignalling correlations are defined such that no party
is allowed to signal to others through his choice of measurement,
that is
\begin{equation}\label{mnscorr}
\underset{j,j'}{\displaystyle\forall} \quad\sum_{a_j}P_{NS}(a_1,\ldots,a_N|x_1,\ldots,x_N)=\sum_{a_j'}P_{NS}(a_1,\ldots,a_N|x_1,\ldots,x_N).
\end{equation}
and similar relations for any two sets of parties.

In order to distinguish between these three kinds of correlations
(local, quantum, and nonsignaling) one devises a Bell test,
involving a certain number (usually finite) of parties,
observables and outcomes. It is convenient to represent a
probability distribution $P(a_1,...,a_N|x_1,...,x_N)$ as a vector
of probabilities $\mathbf{P}$, with entries $P( \mathbf a |
\mathbf x)= P(a_1,...,a_N|x_1,...,x_N)$. In this vector space,
Bell inequalities are given by linear expressions
\begin{equation}\label{bellineq}
S= \sum_j \alpha_j \mathbf P_j\leq \omega_c\,.
\end{equation}
The coefficients $\alpha_j$ are real. The bound of the inequality,
i.e. $\omega_c$, is the largest value of the Bell polynomial $S$
for any local probability distribution, i.e. of the form
\eqref{mlocalcorr}. The set of local correlations defines a convex
polytope. Hence it can be described by a finite set of linear
inequalities, that are called tight Bell inequalities.

The local set is a strict subset of the set of quantum
correlations. The latter is still a convex set, although no longer
a polytope. It can, however, be described by an infinite set of
quantum Bell inequalities, similar to $\eqref{bellineq}$ but
replacing the classical bounds by quantum ones, $\omega_q$, which
may in general exceed the classical one, i.e. $\omega_q\geq
\omega_c$.

Finally, the set of no-signalling correlations is also a convex
polytope, which is strictly larger than the quantum set. Its
facets are given by positivity inequalities, stating that joint
probabilities are positive. The largest value of a Bell polynomial
$S$ for any no-signaling probability distribution is denoted
$\omega_{ns}$; indeed, in general $\omega_{ns} \geq \omega_q$.

The scene being set, let us bring in the protagonists.

\subsection{The GYNI game}
Consider $N$ players disposed on a ring. The game starts with each
player receiving a (private) input bit $x_i$ (say from a referee),
distributed according to the probability density $q(\mathbf{x})$.
Now, the name of the game says it all: the goal is that each
player makes a correct guess $a_i$ of his (say) right-hand side
neighbour's input bit (see Fig.~1), that is
\begin{equation}
\underset{i}{\displaystyle\forall}\quad a_i=x_{i+1},
\end{equation}
where $x_{N+1}\equiv x_1$.
Importantly, the players are successful if and only if all the parties make a correct guess.

\begin{figure}[h]
\sidecaption
  \includegraphics[height=0.25\columnwidth,trim=200 250 200 200]{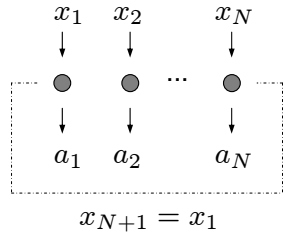}
  \caption{The GYNI game. The goal is that each party outputs its right-neighbour's input: $a_i=x_{i+1}$.}
\end{figure}

The winning probability is defined as
\begin{equation}\label{gyni_ineq}
\omega=\sum_{\mathbf x} q(\mathbf x) P(\mathbf a_{i}=\mathbf x_{i+1}|\mathbf x_i)
\end{equation}
with $ P(\mathbf a_{i}=\mathbf x_{i+1}|\mathbf
x_i)=P(a_1=x_2,\ldots, a_N=x_1|x_1,\ldots, x_N)$. Note that no
communication between the players is allowed during the game.
However, during the preparation stage of the game, the players are
informed of the distribution $q(\mathbf{x})$ of the inputs. They
are allowed to establish a common strategy, which will consist in
utilizing in a judicious way physical resources they are allowed
to share. Here our aim will be to find out how good the parties
can perform at the game when sharing respectively classical,
quantum, and no-signaling correlations. Formally, the game
represents a multipartite Bell test, and eq. $\eqref{gyni_ineq}$
has the structure of a multipartite Bell inequality (see
$\eqref{bellineq}$). Hence our goal will be to determine the
bounds $\omega_c$, $\omega_q$ and $\omega_{ns}$, corresponding to
the classical, quantum and no-signalling bounds of the GYNI Bell
inequality.

\subsection{No quantum advantage}
A central features of the GYNI game is that the maximum winning
probability in the quantum world is exactly the same as in a
classical one. In other words, the GYNI inequalities
\eqref{gyni_ineq} have the same classical and quantum bound, i.e.
$\omega_c=\omega_q$, for any distribution of inputs $q(\mathbf
x)$.

\noindent\textbf{Classical bound.} Let us start by analyzing the
best classical performance. Any probabilistic classical strategy
(which includes the use of shared randomness), can be decomposed
into a convex sum of deterministic strategies. This means that
players can achieve the best winning probability $\omega_c$ by
making a definite guess $a_i$ for each input bit $x_i$. Hence it
is enough to analyze such cases. Imagine that their deterministic
strategy allows them to succeed when receiving some input string
$\mathbf y$, i.e. $a_i(y_i)=y_{i+1}, \forall i$. The input strings
have an interesting orthogonality property: for any other input
$\mathbf x\neq \mathbf y, \bar{\mathbf y}$, there is some $i$ such
that $x_i=y_i$ and $x_{i+1}\neq y_{i+1}$. Then, for any
input-strings $x$, there is always some player $i$ which will make
a wrong guess. He will receive the bit $x_i=y_i$, and output
$a_i(y_i)=y_{i+1}$ according to the strategy, while the correct
would be $a_i=\bar{y}_{i+1}$. It is still possible to score when
receiving $\bar{\mathbf y}$ by setting the strategy to
$a_i(\bar{y_i})=\bar{y}_{i+1}, \forall i$. The best classical
winning probability is then
\begin{equation}\label{cbound}
\omega_c=\max_{\mathbf x}[q(\mathbf x)+q(\bar{\mathbf x})]\,,
\end{equation}
achieved by using $\mathbf y$ such that $q(\mathbf
y)+q(\bar{\mathbf y})=\max_{\mathbf x}[q(\mathbf x)+q(\bar{\mathbf
x})]$.

\noindent\textbf{Quantum bound.} If players have access to quantum
systems, the most general protocol involves a quantum state $\rho$
of arbitrary Hilbert space dimension and general quantum
measurements $M_{x_i}^{a_i}$ corresponding to a probability
distribution
\begin{equation}
P(a_1,\ldots,a_N|x_1,\ldots,x_N)=\tr(\rho M_{x_1}^{a_1}\otimes\ldots\otimes M_{x_N}^{a_N}).
\end{equation}
The best quantum winning probability is then the maximum expected
value of the Bell operator
\begin{equation}
\omega_q=\max_{\psi, \text{meas}}\sum_{\mathbf x} q(\mathbf x) \ev{M_\mathbf{x}}\,.
\end{equation}
where $M_\mathbf{x}\equiv M_{x_1}^{x_2}\otimes\ldots\otimes
M_{x_N}^{x_1}$. Notice that it is enough to optimize over pure
states $\ket{\psi}$ and projective measurements
$M_{a_i}^{x_i}M_{a'_i}^{x_i}=\delta_{a_i=a'_i}M_{a_i}^{x_i}$,
since there are no restrictions on the size of local Hilbert
spaces. Following a similar reasoning to the classical case, take
projectors $M_\mathbf y$ and $M_\mathbf x$, where $ \mathbf x\neq
\mathbf y,  \bar {\mathbf y}$. Then there is some local projector
$i$, defined on the same basis $x_i=y_i$, but projecting on
orthogonal subspaces $x_{i+1}\neq y_{i+1}$. Consequently, the
measurement projectors also obey an orthogonality condition,
\begin{equation}
M_\mathbf{y}M_\mathbf{x}=0 \quad\text{if} \quad \mathbf x\neq \mathbf y, \bar {\mathbf y}\,.
\end{equation}
This property is sufficient to show that
\begin{equation}
\sum_{\mathbf x} q(\mathbf x) \ev{M_\mathbf{x}}\leq\max_{\mathbf x}[q(\mathbf x)+q(\bar{\mathbf x})]\, ,
\end{equation}
which proves that the best quantum winning probability is the same
as the classical one
\begin{equation}
\omega_q=\omega_c\,.
\end{equation}

Indeed, the derivation of the best winning probabilities, in both
the classical and quantum case, relies on a rather natural
orthogonality property (either of deterministic local strategies,
or of orthogonal measurement projectors). Interestingly such a
property is not a consequence of the no-signaling, and does in
general not hold for no-signaling correlations, as we shall see in
the next section.

\subsection{No-signalling advantage}
The game of GYNI is in some sense clearly related to the notion of
signaling. Indeed, if all players can guess correctly their
input's neighbour with a high probability, this will lead to
signaling. Hence it may come to no surprise that quantum
resources, which are indeed no-signaling, give no advantage for
GYNI. Surprisingly this intuition is not correct, as we shall see
here, since certain super-quantum no-signaling correlations can in
fact provide an advantage compared to classical correlations.

\subsubsection{Correlated inputs}
Consider a particular version of the GYNI game in which the inputs
are correlated in the following way: $q(\mathbf x)$ is uniform on
the set of inputs that satisfy the parity condition:
\begin{equation}
\label{promise} q(\mathbf{x})=\left\{\begin{array}{ll}
1/2^{N-1} &\text{if } x_1\oplus\cdots\oplus x_{\hat N}=0\\
0 & \text{otherwise}\,,\end{array}
\right.
\end{equation}
where  $\hat N=N$ if $N$ is odd and $\hat N=N-1$ if $N$ is even.
Using Eq.~\eqref{cbound}, it is easy to check that in classical or
quantum theory, the success probability is limited by
$\omega_c=1/2^{N-1}$. We will see that, allowing for super-quantum
correlations, this limit can be beaten: the best winning
probability $\omega_{ns}$ is upper-bounded by
$\omega_{ns}\leq1/3$. Unlike the previous example, here, although
each party still has absolute uncertainty about his neighbour's
input, no-signalling correlations are able to exploit a global
correlation (the parity of the input-string) to increase the
chance of correct guess.

\noindent\textbf{3-player game.} Let us first consider the
simplest game, featuring three players\footnote{Note that for 2
players, no-signaling correlation provide no advantage.}. The GYNI
inequality is then simply given by
\begin{equation} \label{3player}
\omega=\frac{1}{4}\left[P(000|000)+P(110|011)+P(011|101)+P(101|110)\right] \leq \frac{1}{4},
\end{equation}
where the bound holds for any local or quantum strategy.

Let us first derive an upper bound on the no-signaling winning
probability. Consider the first three terms in \eqref{3player}.
The no-signalling principle implies that
\begin{eqnarray}\label{3NS bound}
P(000|000)\leq \sum_{a_3}P(00a_3|000)=\sum_{a_3}p(00a_3|001)\,,\nonumber\\
P(110|011)\leq \sum_{a_2}P(1a_20|011)=\sum_{a_2}p(1a_20|001)\,,\\
P(011|101)\leq \sum_{a_1}P(a_111|101)=\sum_{a_1}p(a_111|001)\,.\nonumber
\end{eqnarray}
From normalization, we know that the sum of these terms satisfies
$P(000|000)+P(110|011)+P(011|101)\leq1$. We apply a similar
reasoning to the remaining combinations of three probability terms
of Eq.~\eqref{3player}, such that we get
\begin{equation}
3[P_{NS}(000|000)+P_{NS}(110|011)+P_{NS}(011|101)+P_{NS}(101|110)]\leq4\,.
\end{equation}
Hence we obtain an upper limit on the no-signalling winning
probability: $\omega_{ns}\leq1/3$. From this derivation, we also
conclude that it is only possible to reach this limit if every
probability term in the GYNI inequality \eqref{3player} has the
value $1/3$.

Now, it turns out that this upper bound can be reached by an
actual no-signaling probability distribution. The latter is rather
complicated (see \cite{GYNI}), but it would be interesting to
better understand its structure. To be complete, let us mention
that there exist two (among 45) inequivalent classes of extremal
tripartite no-signaling boxes \cite{Pironio2011}, that reach the
best winning no signaling probability $\omega_{ns}=1/3$.

Finally note an interesting feature of inequality \eqref{3player}.
It is a tight Bell inequality, that is, it defines a facet of the
polytope of local correlations \cite{sliwa}. Hence it identifies a
portion of the quantum boundary which is of maximal dimension
\cite{GYNI}.

\noindent\textbf{N-player game.} Next let us consider the general
case of $N$ players, using the condition \eqref{promise} on the
inputs. For any $N$, no-signaling correlations provide an
advantage. To show this, we prove that resources that provide a
winning probability $\omega/\omega_{c}$, in the game with $N$
players, can provide at least the same ratio $\omega/\omega_{c}$
for $N+1$ players. The strategy is very simple: players 1 to $N$
play exactly as in the $N$-player game, while player $N+1$ outputs
his input, $a_{N+1}=x_{N+1}$. This guess is correct when
$x_{N+1}=x_1$, which happens with probability 1/2. Since
$\omega_c(N+1)=(1/2)\omega_c(N)$, the ratio remains the same:
\begin{equation}
\frac{\omega}{\omega_c}(N)=\frac{\omega}{\omega_c}(N+1)\,.
\end{equation}
Then, for any $N\geq 3$, the best no-signalling sucess probability
is at least as good as $(4/3)\omega_c$. This lower bound is
achieved if the first 3 players use the optimal no-signalling
strategy for the 3-player game, while the remaining output their
inputs. They can however do better: using linear programming, we
obtained that $\omega_{ns}/\omega_c=4/3$, for $N=4$;
$\omega_{ns}/\omega_c=16/11$, for $N=5,6$; and
$\omega_{ns}/\omega_c=64/42$, for $N=7,8$. Basing on these three
values the rough estimation would suggest that the ratio
$\omega_{ns}/\omega_c$ scales with $N$ as $4^k/\{(1/3)[(23/3)
4^{k-1} + k+1/3]\}$, where $k=\lfloor(N-1)/2\rfloor$. This in the
limit of $N\to \infty$ gives $\omega_{ns}/\omega_c\to 36/23$.

Remarkably, it turns out that the $N$-partite GYNI Bell
inequalities (with promise \eqref{promise}), hereafter referred to
as GYNI$_N$, are tight for an arbitrary odd $N$~\cite{BellUPB2}
and for $N=4,6$~\cite{GYNI}. It is conjectured that they are tight
for any $N$.

\subsubsection{Upper bounds on $\omega_{ns}$} From the winning probability
in the classical case (Eq.~\eqref{cbound}), we know that
$q(\mathbf x)\leq \omega_c$ for any $\mathbf x$, from which we get
the bound $\omega\leq\omega_c\sum_{\mathbf x} P(\mathbf
a_{i}=\mathbf x_{i+1}|\mathbf x_i)$. Something more meaningful is
obtained if we now assume the distributions to be no-signalling.
Take the summation $\sum_{\mathbf x}P(\mathbf a_{i}=\mathbf
x_{i+1}|\mathbf x)$.  Repeatedly applying the no-signalling
condition \eqref{nscorr}, (first to party $N$, then to $N-1$ and
so on), we get
\begin{multline}
\sum_{x_1, \dots, x_N} P_{NS}(x_2,\ldots,x_N,x_1|x_1,\ldots, x_{N-1}, x_N)\\
\leq \sum_{x_1, \dots, x_N} P_{NS}(x_2,\ldots,x_{N}|x_1,\ldots, x_{N-1})\\
= \sum_{x_1, \dots, x_{N-1}} P_{NS}(x_2,\ldots,x_{N_1}|x_1,\ldots, x_{N-2})
=\dots= 2\,.
\end{multline}
We conclude that the success probability within no-signalling theories is bounded by
\begin{equation}\label{upperns}
\omega_{ns}\leq2\omega_c\,,
\end{equation}
which means that, in general, no-signalling correlations do not
allow deterministic success. As we could predict, for some input
distributions, perfect guessing is only possible if players
communicate. In those cases, it is reasonable to expect that
classical, quantum and no-signalling resources provide exactly the
same best performance.

\subsubsection{Completely uniform distributions of inputs} The
counter-example for the previous intuition is the following: the
completely uniform distribution over the inputs, i.e. $q(\mathbf
x)=1/2^N$. We obtain a tight upper bound on $\omega_{ns}$ by
noticing that $2q(\mathbf x)=\omega_c$, which leads to
\begin{equation}
\omega_{ns}=\frac{\omega_c}{2}\sum_{\mathbf x} P_{NS}(\mathbf a_{i}=\mathbf x_{i+1}|\mathbf x_i)\leq \omega_c\,.
\end{equation}
Classical and no-signalling resources provide exactly the same
best winning probability, in a situation where each player has, a
priori, no information about the input of its neighbour.

Once the GYNI Bell inequality has been introduced, we discuss in
the next sections the application of this inequality in two
different contexts, related to the characterization of quantum
correlations.

\section{Application 1: Gleason's theorem for multipartite systems}
\label{sec:gleason}

Gleason's Theorem~\cite{gleason} is a celebrated theorem in the
foundations of quantum mechanics that allows recovering the Born
rule for quantum probabilities from the structure of quantum
measurements. Recall that a quantum measurement acting on a
Hilbert space of dimension $d$ corresponds to a set of $k$
positive operators, $M_i\geq 0$ with $i=1,\ldots,k$ such that
$\sum_i M_i=\one$. Gleason's Theorem aims at characterizing maps
from quantum measurements to probability distributions. The maps
$\Lambda$ have to satisfy the following properties:

\begin{enumerate}
    \item For any positive operator $0\leq M\leq\one$ one has
    $\Lambda(M)\geq 0$.
    \item Given a quantum measurement, that is, given a set of $k$
    positive operators summing up to the identity, one has
    \begin{equation}\label{glcond}
    \sum_{i=1}^k \Lambda(M_i)=1 .
    \end{equation}
\end{enumerate}

Note that the considered maps are non-contextual, as the
measurement operators are mapped into probabilities independently
of the structure of the measurement they belong to.

Gleason's Theorem implies that all maps satisfying the two
requirements 1 and 2 can be written as $\Lambda(M)=\tr(\rho M)$
for a given quantum state $\rho$, that is, $\rho$ is a positive
operator of trace one. We sketch here the idea of the proof, while
its detailed version may be found e.g. in Ref.~\cite{Busch}.
Notice, however, that the author of \cite{Busch} imposes an
additional condition on $\Lambda$ which, as we show below, can be
simply inferred from 1 and 2. Indeed, consider two measurement
operators $M_1,M_2$ such that $M_3=\one-(M_1+M_2)\geq 0$. Consider
now the two different measurements $\{M_1,M_2,M_3\}$ and
$\{M_1+M_2,M_3\}$. The second measurement is simply a
coarse-grained version of the first in which the two first
outcomes are grouped together. A direct application of property 2
above implies that $\Lambda(M_1)+\Lambda(M_2)=\Lambda(M_1+M_2)$.
This together with properties 1 and 2 imply that the map
$\Lambda$, initially defined for positive operators, can be
uniquely extended to a linear map acting on all operators. It
immediately follows that it can be written as $\tr (X M)$ for an
operator $X$. But then, the condition 1 implies the positivity of
the operator $X$ and its normalization follows from condition 2.
On the other hand, one checks by hand that any of these maps
satisfies conditions 1 and 2.

This theorem is a seminal result in the Foundations of Quantum Physics.
In particular, it implies that Born's rule for the computation of
measurement probabilities can be derived from the Hilbert space
structure of quantum measurements and the two natural conditions
provided above.

\subsection{Gleason correlations}

Gleason's Theorem was initially established for single systems. It
was later extended to composite systems in
Refs.~\cite{localgleason,wallach}. The scenario consists of $N$
independent observers. To each observer $j$, with $j=1,\ldots,N$,
one associates a Hilbert space of dimension $d_j$ and a structure
of quantum measurements given by sets of positive operators
summing up to the identity. For the sake of simplicity, we take in
what follows all the local dimensions equal, $d_i=d,\,\forall i$.
We denote by $\{M_{i_j}^{(j)}\},\,i_j=1,\ldots,k_j$ the sets of
positive operators defining a measurement for each observer, that
is, $\sum_{i_j} M_{i_j}^{(j)}=\one$. The extension of the
theorem then aims at characterizing those maps from measurements
by each observer to probability distributions. In what follows,
for the ease of notation, we restrict the analysis to the
simplest bipartite case, although it can be easily generalized
to an arbitrary number of parties. The map is requested to satisfy
the following conditions:

\begin{enumerate}
    \item For pairs of positive operators,
    $M_{i_1}^{(1)},M_{i_2}^{(2)}$, where $0\leq M_{i_1}^{(1)},M_{i_2}^{(2)}\leq\one$ one has
    $\Lambda(M_{i_1}^{(1)},M_{i_2}^{(2)})\geq 0$.
    \item For pairs of measurements,
    $\{M_{i_1}^{(1)}\},\{M_{i_2}^{(2)}\}$, where
    $0\leq M_{i_1}^{(1)},M_{i_2}^{(2)}\leq\one$ one has
    \begin{equation}\label{glcond}
    \sum_{i_1,i_2=1}^{k_1,k_2}\Lambda(M_{i_1}^{(1)},M_{i_2}^{(2)})=1 .
    \end{equation}
    \item Given two complete quantum measurements by one of the observers, say the second,
    $\{M_{i_2}^{(2)}\}$ and $\{N_{i_2}^{(2)}\}$, the map has to be such
    that
    \begin{equation}\label{lglcond}
    \sum_{i_2=1}^{k_2} \Lambda(M_{i_1}^{(1)},M_{i_2}^{(2)})=
    \sum_{i_2=1}^{k'_2} \Lambda(M_{i_1}^{(1)},N_{i_2}^{(2)}) .
    \end{equation}
\end{enumerate}

The new condition, i.e., the third one, can be understood as the natural formalization
of the no-signalling principle in the considered framework: the
marginal probability distribution seen by one of the observers
cannot depend on the measurement performed by the other observer.
The generalization to an arbitrary number of parties of these
requirements is straightforward. Now $\Lambda$ maps tuple of
positive operators $M_{i_1}^{(1)},\ldots,M_{i_N}^{(N)}$ into
non-signalling probability distributions.

The generalization of the theorem to this scenario, that we call
multipartite Gleason's Theorem, states that all such maps can be
written as
\begin{equation}\label{mgltheorem}
    \Lambda(M_{i_1}^{(1)},\ldots,M_{i_N}^{(N)})=
    \tr\left(W M_{i_1}^{(1)}\otimes\ldots\otimes M_{i_N}^{(N)}\right)
    ,
\end{equation}
where $W$ is an operator which is positive on product states
$\ket{\psi_1}\ldots\ket{\psi_N}$. These operators are also known
as entanglement witnesses \cite{PHTerhal}p. As above, it is clear that maps of the
form~\eqref{mgltheorem} satisfy the previous three requirements
and the non-trivial part of the result is proving the opposite
direction.

As the set of entanglement witnesses is larger than the set of
quantum states (or, in other words, there exist operators $W$ that
are non-positive, but positive on product states) the set of
distributions~\eqref{mgltheorem}, called in what follows Gleason
correlations, is in principle larger than the quantum set.
However, it was shown in~\cite{Barnum,Universal} that the two sets
actually coincide for two parties. Thus, as it happens for
single-party systems, imposing the structure of quantum
measurements for the observers gives the quantum correlations.

The proof of the equivalence between Gleason and bipartite quantum
correlations exploits the Choi-Jamio\l{}kowski (CJ) isomorphism
\cite{JC} that relates maps to operators. In this case, the it
says that any witness $W$ can be written as
$(I\otimes\Upsilon)(\Phi)$, where $\Upsilon$ is a positive map and
$\Phi$ is the projector onto the maximally entangled state
$\ket{\Phi}=(1/\sqrt{d})\sum_{i}\ket{ii}\in\mathbbm{C}^d\ot\mathbbm{C}^d$
and $I$ stands for an identity map. With the aid of
Ref.~\cite{HHHContr}, one can prove that any normalized witness
can also be written as $(I\otimes\Lambda)(\Psi)$, where $\Lambda$
is now a positive and trace-preserving map, while $\Psi$ is a
projector onto some pure bipartite state\footnote{To see this
explicitly let us first notice that for a normalized witness $W$
it holds that $W=(I\ot \Lambda)(\Phi)$ with trace-preserving
$\Lambda$ iff $W_A=\tr_B W=\mathbbm{1}/d$. Then, if $W_A\neq
\mathbbm{1}/d$ but it is of full rank, one introduces another
witness $\widetilde{W}=(1/d)(W_A^{-1/2}\ot \one)W(W_A^{-1/2}\ot
\one)$. Clearly, $\widetilde{W}_A=\mathbbm{1}/d$ and thus
$\widetilde{W}$ is isomorphic to a trace-preserving positive map
$\widetilde{\Lambda}$. Consequently,
\begin{equation}\label{foot}
W=d(\sqrt{W_A}\ot\one)\widetilde{W}(\sqrt{W_A}\ot\one)=d(\sqrt{W_A}\ot\one)(I\ot\widetilde{\Lambda})(\Phi)(\sqrt{W_A}\ot\one)=
(I\ot\widetilde{\Lambda})(\Psi),
\end{equation}
where $\Psi$ denotes a projector onto some normalized pure state
$\ket{\Psi}=\sqrt{d}(\sqrt{W_A}\ot\one)\ket{\Phi}$ of full Schmidt
rank. Finally, if $W_A$ is rank-deficient, one constructs yet
another witness $W'=W+\mathcal{P}_A^{\perp}\ot\one$, where
$\mathcal{P}_A^{\perp}=\one-\mathcal{P}_A$ with $\mathcal{P}_A$
denoting a projector onto the support of $W_A$. Then, $W'_A$ is of
full-rank and therefore $W'$ admits the form (\ref{foot}). To
complete the proof, it suffices to notice that
$W=(\mathcal{P}_A\ot\one)W'(\mathcal{P}_A\ot\one)$, and hence $W$
also assumes the form (\ref{foot}) with a normalized pure state
$\ket{\Psi}=\sqrt{d}[\mathcal{P}_A(W_A')^{1/2}\ot\one]\ket{\Phi}=\sqrt{d}(W_A^{1/2}\ot\one)\ket{\Phi}$
which is now not of full Schmidt rank.}. It then follows that
\begin{eqnarray}\label{bipgleason}
    \tr(W M_{a_1}^{x_1}\otimes M_{a_2}^{x_2})&=&
    \tr[(I\otimes\Lambda)(\Psi) M_{a_1}^{x_1}\otimes
    M_{a_2}^{x_2}]\nonumber\\
    &=&\tr[\Psi
    M_{a_1}^{x_1}\otimes\Lambda^*(M_{a_2}^{x_2})]\nonumber\\
    &=&
    \tr(\Psi M_{a_1}^{x_1}\otimes\widetilde M_{a_2}^{x_2}) ,
\end{eqnarray}
where $\Lambda^*$ is the dual\footnote{The dual map $\Lambda^*$ of
$\Lambda$ is the map such that
$\tr(A\Lambda(B))=\tr[\Lambda^*(A)B]$.} of $\Lambda$ and
$\widetilde M_{a_2}^{x_2}=\Lambda^*(M_{a_2}^{x_2})$ defines a
valid quantum measurement because the dual of a positive
trace-preserving map is positive and unital, that is,
$\Lambda^*(\one)=\one$.

The next natural question is as to whether the equivalence between
quantum and Gleason correlations holds for an arbitrary number of
parties. As we show next, the answer to this question turns out to
be negative, as there are local measurements acting on
entanglement witnesses that produce supra-quantum correlations.
Before proving this result, it is worth mentioning that local
measurements on entanglement witnesses that can be written as
\begin{equation}
    W=\sum_k \big(\Lambda^k_{A_1}\otimes\cdots\otimes
    \Lambda^k_{A_N}\big)(\rho_k) ,
\end{equation}
where $\rho_k$ are $N$-party quantum states, $\Lambda^k_{A_i}$ are
positive trace preserving maps and the number of terms in the sum
is arbitrary, do not lead to supra-quantum correlations. This is a
rather straightforward generalization of the equivalence proof in
the bipartite case.

In order to prove that in the multipartite case the set of Gleason
correlations contains quantum correlations as a strict subset, we
provide an example of entanglement witness and local measurements
giving nonsignalling correlations which violate the three-partite
GYNI Bell inequality. Let us start by introducing the following
set of four fully product vectors from the three-qubit Hilbert
space:
\begin{eqnarray}
\label{upb3q}
  \ket{\psi_1}= \ket{000},\qquad  \ket{\psi_2}= \ket{1e^\bot e},\qquad \ket{\psi_3} = \ket{e1e^\bot},\qquad
  \ket{\psi_4} = \ket{e^\bot e1},
\end{eqnarray}
where $\ket{e}\in\mathbbm{C}^2$ is an arbitrary vector different
from $\ket{0}$ and $\ket{1}$, while $\ket{\overline{e}}$ stands
for a vector orthogonal to $\ket{e}$. One checks by hand that
there is no other three-qubit fully product vector orthogonal to
all $\ket{\psi_i}$s; such sets of product vectors are called
\textit{unextendible product bases} (UPBs) \cite{BennettUPB} (see
section \ref{UPB} for a detailed discussion on UPBs and more
examples).

As noticed in \cite{BennettUPB}, the set (\ref{upb3q}), called
Shifts UPB, can be used for a simple construction of bound
entangled state, i.e., an entangled state from which any type of
maximally entangled state cannot be distilled
\cite{HorodeckiBound}. The state is given by
$\rho_\mathrm{UPB}=(\one-\Pi_\mathrm{UPB})/4$ with
$\Pi_\mathrm{UPB}$, where $\Pi_\mathrm{UPB}$ denotes the projector
onto $\mathrm{span}\{\ket{\psi_i}\}$.

Let us now consider the normalized entanglement witness detecting
$\rho$:
\begin{equation}\label{witness}
    W=\frac{1}{4-8\epsilon}(\Pi_\mathrm{UPB}-\epsilon\mathbbm{1}),
\end{equation}
where
\begin{equation}
\epsilon=\min_{\ket{\alpha \beta \gamma}}\bra{\alpha
\beta \gamma}\Pi_\mathrm{UPB}\ket{\alpha \beta \gamma}.
\end{equation}
The fact that there is no fully product vector orthogonal to
$\ket{\psi_i}$ implies that $\epsilon>0$, and, on the other hand,
it is fairly easy show that $\epsilon<1/2$. One also notices that
$\tr(W\rho)=-\epsilon/(1-2\epsilon)<0$.

Now, one can see that the witness $W$, when measured along the
local bases in the definition of the UPB~\eqref{upb3q}, leads to
correlations that produce a value of GYNI game equal to
$\beta=(1-\epsilon)/(1-2\epsilon)$, which is larger than one for
all positive $\epsilon$ not larger than one-half. Thus, these
correlations represent an example of Gleason correlations with no
quantum analogue.

\section{Application II: Quantum correlations and information principles}
\label{sec:qcorr}

As mentioned in the introduction, an intense research effort has
recently been devoted to understand why nonlocality appears to be
limited in quantum mechanics. Information concepts have been
advocated as the key missing ingredient needed to single-out the
set of quantum correlations \cite{brassardNP,popescuNP,bub}. The
main idea is to identify `natural' information principles,
satisfied by quantum correlations, but violated by super-quantum
correlations. The existence of the latter would then have
implausible consequences from an information-theoretic point of
view. Celebrated examples of these principles are information
causality \cite{Pawlowski2009a} or non-trivial communication
complexity \cite{Dam2005}. While the use of these information
concepts has been successfully applied to specific scenarios
\cite{Brassard2006,Brunner2009,Allcock2009,Ahanj2010,Cavalcanti2010},
proving, or disproving, the validity of a principle for quantum
correlations is extremely challenging. On the one hand, it is
rather difficult to derive the Hilbert space structure needed for
quantum correlations from information quantities. On the other
hand, proving that some super-quantum correlations are fully
compatible with an information principle seems out of reach, as
one needs to consider all possible protocols using these
correlations and show that none of them leads to a violation of
the principle. Hence it is still unclear whether this approach is
able to fully recover the set of quantum correlations.

Therefore it is relevant to derive general features of a principle that
could potentially identify quantum correlations. Using GYNI, it was recently
shown that such a principle must be genuinely multipartite. More specifically,
no bipartite principle can characterize the set of quantum correlations when
three of more observers are involved~\cite{gwan}. This rest of this section
is devoted to this result.

Before discussing the result, it is worth
recalling that, so far, most information-theoretic
principles have been formulated in the bipartite scenario.
Actually, even the general formulation of the no-signalling principle has a bipartite
structure: correlations among $N$ observers are
compatible with the no-signalling principle whenever there exists
no partition of the $N$ parties into two groups such that the
marginal probability distribution of one set of the parties
depends on the measurements performed by the other set of parties (see \eqref{mnscorr}).
Moving to information causality, it considers a scenario in which
a first party, Alice, has a string of $n_A$ bits. Alice is then
allowed to send $m$ classical bits to a second party, Bob.
Information causality bounds the information Bob can gain on the
$n_A$ bits held by Alice whichever protocol they implement making
use of the pre-established bipartite correlations and the message
of $m$ bits. Alice and Bob can violate this principle when they
have access to some super-quantum correlations
\cite{Pawlowski2009a}. In the case $m=0$, information causality
implies that in absence of a message, pre-established correlations
do not allow Bob to gain any information about any of the bits
held by Alice, which is nothing but the no-signaling principle.
This suggests the following generalization of information
causality to an arbitrary number of parties, mimicking what is
done for the no-signalling principle: given some correlations
$P(a_1,\ldots,a_n|x_1,\ldots,x_N)$, they are said to be compatible
with information causality whenever all bipartite correlations
constructed from them satisfy this principle. This generalization
ensures the correspondence between no-signaling and information
causality when $m=0$ for an arbitrary number of parties. This
generalization of information causality has recently been applied
to the study of extremal tripartite non-signaling
correlations~\cite{singapore}.

Regarding non trivial communication complexity, it studies how
much communication is needed between two distant parties to
compute probabilistically a function of some inputs in a
distributed manner. It can also be interpreted as a generalization
of the no-signaling principle, as it imposes constraints on
correlations when a finite amount of communication is allowed
between parties. Different multipartite generalizations of the
principle have been studied, see \cite{CC}. However, as
for information causality, one can always consider the
straightforward generalization in which the principle is applied
to every partition of the $N$ parties in two groups.

We are now in position to review the proof
of the impossibility of characterizing quantum correlations for an
arbitrary number of parties using bipartite principles. For
simplicity, we restrict the analysis to tripartite correlations.

\subsection{Time-ordered-bilocal correlations and GYNI}

The first ingredient in the proof is the characterization of
multipartite correlations such that any bipartite correlations
constructed from them have a classical local model. By definition,
correlations satisfying this property do not violate any bipartite
principle satisfied by classical correlations.

A priori, one would think that if the correlations
$P(a_1,a_2,a_3|x_1,x_2,x_3)$ have a local model along all possible
bipartitions, namely $A_1-A_2A_3$, $A_2-A_1A_3$ and $A_3-A_1A_2$,
that is,
\begin{eqnarray}\label{bilocal}
    P(a_1,a_2,a_3|x_1,x_2,x_3)&=&\sum_\lambda P_1(\lambda)P_1(a_1|x_1,\lambda)P_1(a_2,a_3|x_2,x_3,\lambda)\nonumber\\
    &=&\sum_\lambda P_2(\lambda)P_2(a_2|x_2,\lambda)P_2(a_1,a_3|x_1,x_3,\lambda)\nonumber\\
    &=&\sum_\lambda P_3(\lambda)P_3(a_3|x_3,\lambda)P_1(a_1,a_2|x_1,x_2,\lambda) ,
\end{eqnarray}
then, any bipartite object constructed from it also has a local
model. This intuition however has proven to be wrong
in~\cite{gwan2}, where it was shown how non-local bipartite
correlations can be derived from correlations having a
decomposition of the form of~\eqref{bilocal}. The characterization
of multipartite correlations such that a local model exists for
any bipartite correlations derived from it is then subtler than
expected. Indeed, at the moment, it is unknown what is the largest
set of correlations having this property~\cite{gwan2}. It has
however been shown in~\cite{gwan} that the set of
time-ordered-bilocal correlations (TOBL) do fulfill this
requirement. Tripartite correlations have a TOBL model whenever
they can be written as
\begin{eqnarray}
\label{eq:tobl}
  P(a_1, a_2, a_3|x_1, x_2, x_3)
  &=& \sum_\lambda P_\lambda^{i|jk} P(a_i|x_i, \lambda) P_{j\rightarrow k}(a_j, a_k|x_j,
  x_k,\lambda)\nonumber\\
  &=& \sum_\lambda P_\lambda^{i|jk} P(a_i|x_i, \lambda) P_{j\leftarrow k}(a_j, a_k|x_j,
  x_k,\lambda)
\end{eqnarray}
for $(i,j,k)= (1,2,3), (2,3,1),(3,1,2)$, with the distributions
$P_{j\rightarrow k}$ and $P_{j\leftarrow k}$ obeying the
conditions
\begin{align}
 \label{eq:timeordered1}
  &P_{j \rightarrow k}(a_j|x_j,\lambda) = \sum_{a_k} P_{j \rightarrow k}(a_j, a_k|x_j, x_k,\lambda),\\
  \label{eq:timeordered2}
  &P_{j \leftarrow k}(a_k|x_k,\lambda) = \sum_{a_j} P_{j \leftarrow k}(a_j, a_k|x_j, x_k,\lambda).
\end{align}
The notion of TOBL correlations first appeared
in~\cite{Pironio2011} (see~\cite{gwan2} and~\cite{tobl2} for a
proper introduction and further motivation for such models). As
can be seen from the relations \eqref{eq:timeordered1} and
\eqref{eq:timeordered2} we impose the distributions
$P_{j\rightarrow k}$ and $P_{j\leftarrow k}$ to allow for
signaling at most in one direction, indicated by
the arrow (see Table \ref{tab:point}). 

\begin{table}[ht]

 \centering
 \begin{center}
\begin{tabular}{c|c|c|c}
$x_2$ & $x_3$ & $a_2$ & $a_3$ \\
\hline \hline
$0$ & $0$ & $0$ & $0$\\
\hline
$0$ & $1$ & $0$ & $1$\\
\hline
$1$ & $0$ & $1$ & $1$\\
\hline
$1$ & $1$ & $1$ & $0$\\
 \end{tabular}
\qquad
\begin{tabular}{c|c|c|c}
$x_2$ & $x_3$ & $a_2$ & $a_3$ \\
\hline \hline
$0$ & $0$ & $0$ & $0$\\
\hline
$0$ & $1$ & $1$ & $1$\\
\hline
$1$ & $0$ & $0$ & $0$\\
\hline
$1$ & $1$ & $1$ & $1$\\
 \end{tabular}
\qquad
\begin{tabular}{c|c|c|c}
$x_2$ & $x_3$ & $a_2$ & $a_3$ \\
\hline \hline
$0$ & $0$ & $1$ & $0$\\
\hline
$0$ & $1$ & $0$ & $1$\\
\hline
$1$ & $0$ & $0$ & $1$\\
\hline
$1$ & $1$ & $1$ & $0$\\
 \end{tabular}
 \end{center}
\caption{Different examples of deterministic bipartite probability
distributions $P_{23}(a_2, a_3|x_2,x_3,\lambda)$ characterized by
output assignments to the four possible combination of
measurements. Left: inputs and outputs corresponding to a point
$P_{2 \rightarrow 3}(a_2, a_3|x_2,x_3,\lambda)$ in the decomposition
\eqref{eq:tobl}. Center: inputs and outputs corresponding to a
point $P_{2 \leftarrow 3}(a_2, a_3|x_2,x_3,\lambda)$ in
\eqref{eq:tobl}. Right: inputs and outputs corresponding to a
distribution which allows signaling in the two directions.}
 \label{tab:point}
\end{table}

To understand the operational meaning of these models, consider
the bipartition $1|23$ for which systems $2$ and $3$ act together.
In this situation, $P(a_1,a_2,a_3|x_1,x_2,x_3)$ can be simulated if a
classical random variable $\lambda$ with probability distribution
$p^{1|23}_\lambda$ is shared by parts $1$ and the composite system
$2-3$, and they implement the following protocol: given $\lambda$,
$1$ generates its output according to the distribution $P(a_1|x_1,
\lambda)$; on the other side, and depending on which of the
parties $2$ and $3$ measures first, $2-3$ uses either $P_{2
\rightarrow 3}(a_2, a_3|x_2, x_3,\lambda)$ or $P_{2 \leftarrow
3}(a_2, a_3|x_2, x_3,\lambda)$ to produce the two measurement
outcomes. Likewise, any other bipartition of the three systems
admits a classical simulation.

By construction, the set of tripartite TOBL models is convex and
is included (in fact, it is strictly included~\cite{gwan2}) in the
set of tripartite probability distributions of the form
(\ref{bilocal}). Moreover, TOBL models always produce classical
correlations under post-selection: indeed, suppose that we are
given a tripartite distribution $P(a_1,a_2,a_3|x_1,x_2,x_3)$
satisfying condition (\ref{eq:tobl}), and a postselection is made
on the outcome $\tilde{a}_3$ of measurement $\tilde{x}_3$ by party
$3$. Then, one has
\begin{equation}
  \hspace{-10pt}P(a_1,a_2|x_1,x_2\tilde{x}_3\tilde{a}_3)
  =\sum_\lambda P'_\lambda P(a_1|x_1,\lambda)P'(a_2|x_2, \lambda),
  \label{post_sel}
\end{equation}

\noindent with

\begin{equation}
P'_\lambda=\frac{P_\lambda^{1|23}}{P(\tilde{a}_3|\tilde{x}_3)}P_{2\leftarrow
3}(\tilde{a}_3|\tilde{x}_3,\lambda), \qquad
P'(a_2|x_2,\lambda)=P_{2\leftarrow
3}(a_2|x_2,\tilde{x}_3,\tilde{a}_3, \lambda).
\end{equation}
Postselected tripartite TOBL boxes can thus be regarded
as elements of the TOBL set with trivial outcomes for one of the
parties.

We now demonstrate that any possible bipartite correlations
derived from many uses of TOBL correlations have a local model
and, thus, are compatible with any bipartite principle satisfied
by classical (and obviously quantum) correlations. The most
general protocol consists in distributing an arbitrary number of
boxes described by $P^1, P^2,\ldots, P^N$ among three parties
which are split into two groups, $A$ and $B$. Both groups can
process the classical information provided by their share of the
$N$ boxes. For instance, outputs generated by some of the boxes
can be used as inputs for other boxes (see
figure~\ref{fig:wirings}). This local processing of classical
information is usually referred to as \textit{wirings}. Thus, in
order to prove our result in full generality, we should consider
all possible wirings of tripartite boxes. We show next that if
$P^1, P^2,\ldots, P^N$ are in TOBL, then the resulting
correlations $P_{\mathrm{fin}}$ obtained after any wiring protocol
have a local decomposition with respect to the bipartition $A|B$,
and therefore fulfill any bipartite information principle.

For simplicity, we illustrate our procedure for the wiring shown
in figure~\ref{fig:wirings}, where boxes $P^{1}, P^{2} , P^{3}$
are distributed between two parties $A$ and $B$, and party $A$
only holds one subsystem of each box. The construction is
nevertheless general: it applies to any wiring and also covers
situations where for some TOBL boxes party $A$ holds two
subsystems instead of just one (or even the whole box).

From (\ref{eq:tobl}) we have
\begin{eqnarray}
\label{eq:tomodels}
  P^{i}(a^{i}_1,a^{i}_2,a^{i}_3| x^{i}_1,x^{i}_2,x^{i}_3)&=& \sum_{\lambda^i}
  P_{\lambda^i}^{i} P_1^{i}(a^{i}_1|x_1^{i}, \lambda^i)P_{2 \rightarrow 3}^{i}(a_2^{i},a_3^{i}|x^{i}_2,x^{i}_3, \lambda^i)\\
&=& \sum_{\lambda^i} P_{\lambda^i}^{i} P_1^{i}(a^{i}_1|x^{i}_1,
\lambda^i)P_{2 \leftarrow 3}^{i}(a^{i}_2,a_3^{i}|x_2^{i},x_3^{i},
\lambda^i),
\end{eqnarray}
\noindent for $i= 1,2, 3$. Consider the first box that receives an
input, in our case subsystem $2$ of $P^{1}$. The first outcome
$a_2^{1}$ can be generated by the probability distribution $P_{2
\rightarrow 3}^{1}(a_2^{1},a_3^{1}|x_2^{1},x_3^{1},\lambda^{1})$
encoded in the hidden variable $\lambda^1$ that models these first
correlations. This is possible because for this decomposition
$a_2^{1}$ is defined independently of $x_3^{1}$, the input in
subsystem $3$. Then, the next input $x_3^{2}$, which is equal to
$a_2^{1}$, generates the output $a_3^{2}$ according to the
probability distribution $P_{2 \leftarrow
3}^{2}(a_2^{2},a_3^{2}|x_2^{2},x_3^{2},\lambda^{2})$ encoded in
$\lambda^2$. The subsequent outcomes $a_2^{i}$ and $a_3^{i}$ are
generated in a similar way. The general idea is that outputs are
generated sequentially using the local models according to the
structure of the wiring on $2-3$. Finally, subsystem $1$ can
generate its outputs $a^{i}$ by using the probability distribution
$P_{1}^{i}(a_1^{i}|x^{i},\lambda^{i})$. This probability
distribution is independent of the order in which parties $2$ and
$3$ make their measurement choices for any of the boxes. Averaging
over all hidden variables one obtains $P_\mathrm{fin}$. This
construction provides the desired local model for the final
probability distribution.

\begin{figure}
\begin{center}
\begin{picture}(0,0)%
\includegraphics{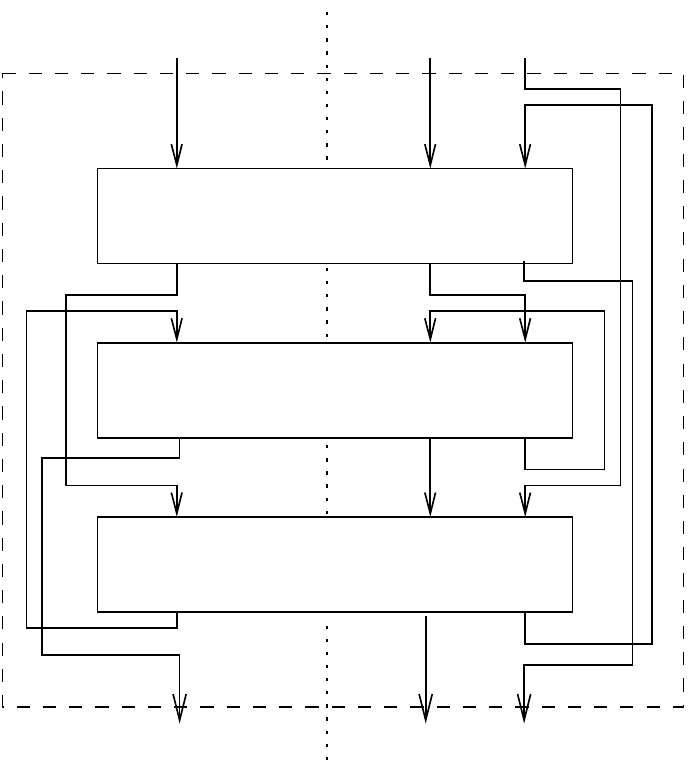}%
\end{picture}%
\setlength{\unitlength}{4144sp}%
\begingroup\makeatletter\ifx\SetFigFont\undefined%
\gdef\SetFigFont#1#2#3#4#5{%
  \reset@font\fontsize{#1}{#2pt}%
  \fontfamily{#3}\fontseries{#4}\fontshape{#5}%
  \selectfont}%
\fi\endgroup%
\begin{picture}(3138,3496)(4309,505)
\put(5086,2999){\makebox(0,0)[lb]{\smash{{\SetFigFont{9}{10.8}{\familydefault}{\mddefault}{\updefault}{\color[rgb]{0,0,0}$P^{1}(a_1^1,a_2^1,a_3^1|x_1^1,x_2^1,x_3^1)$}%
}}}}
\put(6661,3809){\makebox(0,0)[lb]{\smash{{\SetFigFont{10}{12.0}{\familydefault}{\mddefault}{\updefault}{\color[rgb]{0,0,0}$y_2$}%
}}}}
\put(5086,569){\makebox(0,0)[lb]{\smash{{\SetFigFont{10}{12.0}{\familydefault}{\mddefault}{\updefault}{\color[rgb]{0,0,0}$a$}%
}}}}
\put(5086,3809){\makebox(0,0)[lb]{\smash{{\SetFigFont{10}{12.0}{\familydefault}{\mddefault}{\updefault}{\color[rgb]{0,0,0}$x$ }%
}}}}
\put(6211,3809){\makebox(0,0)[lb]{\smash{{\SetFigFont{10}{12.0}{\familydefault}{\mddefault}{\updefault}{\color[rgb]{0,0,0}$y_1$}%
}}}}
\put(6211,569){\makebox(0,0)[lb]{\smash{{\SetFigFont{10}{12.0}{\familydefault}{\mddefault}{\updefault}{\color[rgb]{0,0,0}$b_1$}%
}}}}
\put(6661,569){\makebox(0,0)[lb]{\smash{{\SetFigFont{10}{12.0}{\familydefault}{\mddefault}{\updefault}{\color[rgb]{0,0,0}$b_2$}%
}}}}
\put(4411,3764){\makebox(0,0)[lb]{\smash{{\SetFigFont{10}{12.0}{\rmdefault}{\mddefault}{\updefault}{\color[rgb]{0,0,0}$A$}%
}}}}
\put(7201,3764){\makebox(0,0)[lb]{\smash{{\SetFigFont{10}{12.0}{\rmdefault}{\mddefault}{\updefault}{\color[rgb]{0,0,0}$B$}%
}}}}
\put(5086,1379){\makebox(0,0)[lb]{\smash{{\SetFigFont{9}{10.8}{\familydefault}{\mddefault}{\updefault}{\color[rgb]{0,0,0}$P^{3}(a_1^3,a_2^3,a_3^3|x_1^3,x_2^3,x_3^3)$}%
}}}}
\put(5086,2144){\makebox(0,0)[lb]{\smash{{\SetFigFont{9}{10.8}{\familydefault}{\mddefault}{\updefault}{\color[rgb]{0,0,0}$P^{2}(a_1^2,a_2^2,a_3^2|x_1^2,x_2^2,x_3^2)$}%
}}}}
\end{picture}%
\end{center}
\caption{\label{fig:wirings}Wiring of several tripartite
correlations distributed among parties $A$ and $B$. The generated
bipartite box accepts a bit $x$ (two bits $y_1,y_2$) as input on
subsystem $A$ ($B$) and returns a bit $a$ (two bits $b_1,b_2$) as
output. Relations \eqref{eq:tomodels} guarantee that the final
bipartite distribution $P_\mathrm{fin}(a,(b_1,b_2)|x,(y_1,y_2))$
admits a local model.}
\end{figure}

The final step in the proof consists of showing that there exist
correlations in the TOBL set that do not have a quantum
realization. This was shown by means of the GYNI inequality. More
precisely, it can be proven that, contrary to quantum
correlations, this inequality is violated by TOBL correlations:
\begin{equation}\label{eq:linprog}
\begin{aligned}
&{\text{maximize}}\;\; P(000|000)+P(110|011)+P(011|101)+P(101|110)
 \\
&\text{subject to}\;\; P(a_1, a_2, a_3|x_1, x_2, x_3)\in\mathrm{TOBL}.
\end{aligned}
\end{equation}
The maximization yields a value of $7/6$, implying the existence
of supra-quantum correlations in TOBL. The form of the TOBL
correlations leading to this violation can be found
in~\cite{gwan}. Later, another example of supra-quantum
correlations in TOBL was provided in~\cite{singapore}, where the
authors proved that an extremal point of the no-signalling
polytope for three parties and two two-outcome measurements per
party is also in TOBL and has no quantum realization.

\section{Generalization of GYNI: Bell inequalities without quantum violation and unextendible product bases}
\label{sec:upb}

The relation between GYNI's Bell inequality and the three-qubit
unextendible product basis (UPB) was used in the previous section
to show that, contrary to the bipartite case \cite{Barnum} (see
also \cite{Universal}), in the three-partite scenario Gleason
correlations make a larger set than the quantum ones. Actually,
this link can be generalized and relates nontrivial Bell
inequalities without violation to UPB, see Ref. \cite{BellUPB1}.
All these Bell inequalities lack quantum violation, nevertheless,
they are nontrivial in the sense that there exist some
nonsignalling correlations violating them. They therefore
complement the results of Ref. \cite{Universal} providing new
examples of multipartite scenarios where Gleason correlations are
different from the quantum ones. More importantly, however, some
of UPBs can lead to {\it tight} Bell inequalities with no quantum
violation, novel examples of which have recently been found
\cite{BellUPB2}.

Our aim in this section is to recall the method from Refs.
\cite{BellUPB1,BellUPB2} and then discuss properties of the
resulting Bell inequalities. We also provide some classes of
nontrivial Bell inequalities with no quantum violation associated
to UPBs. Finally, we go beyond UPB and show that there are also
sets of orthogonal product vectors that are not UPBs but can be
associated nontrivial Bell inequalities. Before that let us recall
the notion of unextendible product bases and briefly review their
properties.

\subsection{Unextendible product bases}
\label{UPB}

We start by introducing an $N$-partite product Hilbert space
\begin{equation}\label{HilbertSpace}
H=\mathbbm{C}^{d_1}\ot\ldots\ot\mathbbm{C}^{d_N},
\end{equation}
where $d_i$ $(i=1,\ldots,N)$ denote, for the time being arbitrary,
dimensions of the local Hilbert spaces. In what follows we will
call an element $\ket{\psi}$ of $H$ {\it fully product} if it
assumes the form
$\ket{\psi}=\otimes_{i=1}^{N}\ket{\psi_i}\equiv\ket{\psi_1,\ldots,\psi_N}$
with $\ket{\psi_i}\in\mathbbm{C}^{d_i}$.

Then, let us consider a set of orthogonal product vectors
\begin{equation}\label{S}
S=\left\{\ket{\Psi_m}=\ket{\psi_{m}^{(1)}}\ot\ldots\ot\ket{\psi_m^{(N)}}\right\}_{m=1}^{|S|},
\end{equation}
where $\ket{\psi_{m}^{(i)}}$ $(m=1,\ldots,|S|)$ are local vectors
belonging to $\mathbbm{C}^{d_i}$ and $|S|\leq \mathrm{dim}H$. With
this we have the following definition \cite{BennettUPB}.

\begin{definition}\label{def:upb}
Let $S$ be a set of orthogonal fully product vectors (\ref{S})
from $H$. We call $S$ {\it unextendible product basis} (UPB) if it
spans a proper subspace in $H$, i.e., $|S|<\dim H$, and there is
no product vector $\ot_{i=1}^{N}\ket{\phi_i}\in H$ orthogonal to
$\mathrm{span}S$.
\end{definition}

\begin{figure}[]
\sidecaption
\includegraphics[width=0.4\textwidth]{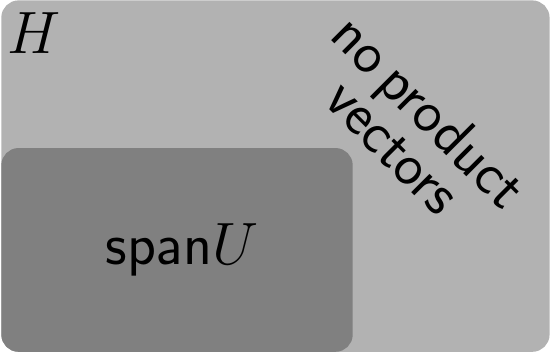}
\caption{Schematic definition of a UPB: a set of orthogonal
product vectors $S$ spanning a proper subspace
$\mathrm{span}S\subset H$ such that there is no vector
$\ot_{i=1}^{N}\ket{\phi_i}\in H$ orthogonal to $S$. A normalized
projector onto $(\mathrm{span}S)^{\perp}$ (\ref{BE}) is a bound
entangled state \cite{BennettUPB}.}\label{figura:BE}
\end{figure}

The notion of unextendible product bases reflects the peculiar
feature of some of product Hilbert spaces $H$ in that they can be
represented as direct sums of two orthogonal subspaces, of which
one is spanned by product vectors, while the second does not
contain any of them, i.e., is completely entangled (see Fig.
\ref{figura:BE}). This has interesting consequences from the
quantum information point of view. As it was first observed by
Bennett and coworkers \cite{BennettUPB}, UPBs can be used for a
construction of bound entangled states, i.e., states that are
entangled but nevertheless no entanglement can be distilled from
them by means of local operations and classical communication
\cite{HorodeckiBound}.

To be more precise, following \cite{BennettUPB}, let us consider a
particular UPB $U$, and the normalized projector onto the subspace
of $H$ orthogonal to $U$, i.e.,
\begin{equation}\label{BE}
\varrho=\frac{1}{\dim H-|U|}\left(\mathbbm{1}-\Pi\right).
\end{equation}
By $\Pi$ and $\mathbbm{1}$ we denoted, respectively, the projector onto the
subspace spanned by $U$ and identity acting on $H$.
Since there is not product vector orthogonal to $U$, the support
of $\varrho$ consists only of entangled states, implying that
$\varrho$ must be entangled. Then, it immediately follows from Eq.
(\ref{BE}) that $\varrho$ has all partial transpositions positive
which, due to Ref. \cite{HorodeckiBound}, justifies the statement that
$\varrho$ is bound entangled.

To illustrate the above definition we consider the following examples of
UPBs.

\begin{example}We start from the TILES UPB, one of the first
bipartite UPBs introduced in Ref. \cite{BennettUPB}. It consists
of five two-qutrit vectors of the form
\begin{eqnarray}\label{TILES}
U_{\mathrm{TILES}}&=&\{\ket{0}(\ket{0}-\ket{1}), \ket{2}(\ket{1}-\ket{2}),
(\ket{0}-\ket{1})\ket{2}, (\ket{1}-\ket{2})\ket{0},\nonumber\\
&&\hspace{0.5cm}(\ket{0}+\ket{1}+\ket{2})^{\ot 2}\}.
\end{eqnarray}
Notice that in two-qutrit Hilbert space there only exist
five-elements UPBs and all of them are known
\cite{BennettUPB,UPBhuge,Leinaas,Skowronek}.
\end{example}

\begin{example}
Second, let us consider a general class of $N$-qubit unextendible
product bases with odd $N=2k-1$ $(k\in\mathbbm{N};k\geq 2)$ given
by the following $2k$ vectors \cite{UPBhuge}:
\begin{eqnarray}\label{GenShifts}
U_{\mathrm{GenShifts}}&=&\{\ket{0\ldots 0},\ket{1e_1\ldots e_{k-1}\overline{e}_{k-1}\ldots\overline{e}_1},
\ket{\overline{e}_11e_1\ldots e_{k-1}\overline{e}_{k-1}\ldots\overline{e}_2},
\ldots,\nonumber\\
&&\hspace{0.2cm}\ket{e_1\ldots e_{k-1}\overline{e}_{k-1}\ldots\overline{e}_11}\}
\end{eqnarray}
with $\{\ket{0},\ket{1}\}$ and
$\{\ket{e_i},\ket{\overline{e}_i}\}$ $(i=1,\ldots,k-1)$ being $k$
arbitrary but different bases in $\mathbbm{C}^2$. The $i$th
$(i\geq 2)$ vector in (\ref{GenShifts}), except for the first two
ones, is obtained from the vector $i-1$ by shifting all the local
vectors by one to the right, and thus the name {\it Generalized
Shifts}.
\end{example}

\begin{example}
Third, let us consider the general class of UPBs found by Niset
and Cerf \cite{NisetCerf}. Here we take the Hilbert space
$H=(\mathbbm{C}^{d})^{\ot N}$, where $N\geq 3$ and $d\geq N-1$,
and the following set of $N(d-1)+1$ vectors:
\begin{equation}\label{NC1}
U_{\mathrm{NC}}=\{\ket{e_{d-1}}^{\ot N}\}\cup \bigcup_{i=0}^{N-1}S_i,
\end{equation}
where
\begin{equation}\label{NC2}
S_0=\{\ket{0,1,\ldots,d-1}\ket{e_0},\ldots,\ket{0,1,\ldots,d-1}\ket{e_{d-2}}\}
\end{equation}
and $S_i=V^{i}S_0$ $(i=1,\ldots,N-1)$ with $V$ denoting a unitary
permutation operator such that
$V\ket{x_1}\ldots\ket{x_N}=\ket{x_N}\ket{x_{N-1}}\ldots\ket{x_1}$
for $\ket{x_i}\in\mathbbm{C}^d$, and $\{\ket{e_i}\}_{i=0}^{d-1}$
is any orthogonal basis in $\mathbbm{C}^d$ different from the
standard one. Notice that $U_{NC}$ can straightforwardly be
generalized to an arbitrary local dimension $d_i\geq N-1$
$(i=1,\ldots,N)$ just by adjusting both bases at each site to the
respective dimension \cite{NisetCerf}.
\end{example}

Both classes of multipartite UPBs from examples 2 and 3 (here up
to local unitary operations) recover, for $N=3$, the already
introduced Shifts UPB (\ref{upb3q}), i.e.,
$U_{\mathrm{Shifts}}=\{\ket{000},\ket{1\overline{e}e},\ket{e1\overline{e}},\ket{\overline{e}e1}\}$
with $\{\ket{e},\ket{\overline{e}}\}$ being an arbitrary basis of
$\mathbbm{C}^2$ different from the standard one. Clearly, this set
can be slightly generalized by taking the second basis different
at each site, that is,
$\{\ket{000},\ket{1\overline{e}_2e_3},\ket{e_11\overline{e}_3},\ket{\overline{e}_1e_21}\}$
(the first basis can be fix to the standard one by a local unitary
operation). Then, as it was shown by Bravyi \cite{Bravyi}, any
three-qubit UPB is equivalent to this one up to local unitary
operations and permutations of the parties.

\subsection{Constructing Bell inequalities with no quantum violation from unextendible product bases}

We are now ready to recall the method from
\cite{BellUPB1,BellUPB2} allowing to associate a nontrivial Bell
inequality with no quantum violation to a UPB having certain
property.

\subsubsection{The construction}
\label{sec:constr}

To begin, consider again the product Hilbert space $H$ and the set of
vectors $S$. For the time being, we do not assume $S$ to be a UPB,
keeping, however, the assumption that elements of $S$ are orthogonal
product vectors from $H$. Then, let us collect all different local vectors
appearing in all vectors $\ket{\Psi_m}$ at the $i$th site in the local
sets
\begin{equation}\label{Si}
S^{(i)}=\big\{\ket{\psi_m^{(i)}}\big\}_{m=1}^{s_i}\qquad
(i=1,\ldots,N),
\end{equation}
where $s_i\leq |S|$. Subsequently, among elements of $S^{(i)}$ we
search for mutually orthogonal vectors and collect them in
separate subsets $S_n^{(i)}$ $(n=0,\ldots,k_i)$ such that
$S_0^{(i)}\cup\ldots\cup S_{k_i}^{(i)}=S^{(i)}$ for any $i$ (see
Fig. \ref{fig:tree}). Notice that these subsets may, but do not
have to, span the corresponding Hilbert space $\mathbbm{C}^{d_i}$.

\begin{figure}[]
\sidecaption
\includegraphics[width=0.5\textwidth]{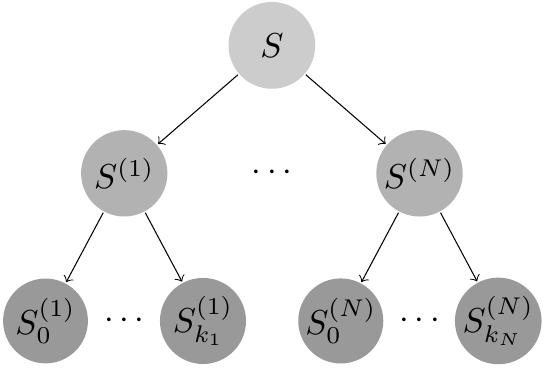}
\caption{Schematic description of our construction. From the set
$S$ having the local independence property, one constructs the
local sets $S^{(i)}$ $(i=1,\ldots,N)$ by collecting different
local vectors in $\ket{\Psi_m}$. Then, one distinguishes local
subsets $S^{(i)}_m$ of mutually orthogonal vectors among elements
of each local set $S^{(i)}$.}\label{fig:tree}
\end{figure}

It should be emphasized that there exist sets $S$ for which
the local subsets cannot be unambiguously defined. This is, for
instance, the case for vectors $U_{\mathrm{TILES}}$ [cf. Eq.
(\ref{TILES})]. At both sites there are five different vectors
$\ket{0}$, $\ket{0}-\ket{1}$, $\ket{1}-\ket{2}$, $\ket{2}$, and
$\ket{0}+\ket{1}+\ket{2}$. Clearly, the first one is orthogonal to
the third and fourth ones, however, the latter are not mutually
orthogonal. Then, in order to avoid this ambiguity, we consider
only those sets $S$ that have the following property.\\

\noindent{\bf Property.} {\it Let $S$ be a set of orthogonal
product vectors from $H$. No two vectors belonging to different
subsets $S^{(i)}_{k}$ and $S^{(i)}_{l}$ $(k\neq l)$, which are
constructed along the above lines, are orthogonal.\\}

In other words, what we need is that the local subsets are
constructed in the way that by replacing one of them by another
one of the same size, we keep the orthogonality of elements of
$S$. In yet another words, the above property guarantees that the
orthogonality of $S$ is preserved under any unitary rotation of
elements of any local subset $S_{k}^{(i)}$, which, in a sense,
makes them independent. Hence, for the purposes of the
present framework, we propose to call it {\it local independence
property}.

A particular example of a set having the above property is the
already introduced Shifts UPB (\ref{upb3q}). At each site
there are four different vectors $\ket{0}$, $\ket{1}$, $\ket{e}$,
and $\ket{\overline{e}}$, which can be grouped in two distinct
sets $S_0=\{\ket{0},\ket{1}\}$ and
$S_1=\{\ket{e},\ket{\overline{e}}\}$. Since, by the
very assumption, $\ket{e}\neq \ket{0},\ket{1}$, none of the
vectors from $S_0$ is orthogonal to none of elements of $S_1$,
and hence $U_{\mathrm{Shifts}}$ has the local independence property.

Interestingly, as it can easily be checked, all sets of orthogonal
vectors in multi-qubit Hilbert spaces have the above property and
all local subsets contain at most two elements.
On the other hand, the example of TILES UPB shows that this in
general is not the case when local dimensions are larger than two.

Let us now pass to our construction of Bell inequalities. To every
vector $\ket{\Psi_m}$ from $S$ [cf. Eq. (\ref{S})] we can
associate a conditional probability $P({\bf a}_m|{\bf x}_m)$, or,
strictly speaking, vectors of measurements settings and outcomes
\begin{equation}
{\bf a}_m=(a_m^{(1)},\ldots,a_m^{(N)})\qquad  \mathrm{and}\qquad
{\bf x}_m=(x_m^{(1)},\ldots,x_m^{(N)})
\end{equation}
in the following way:
\begin{itemize}
\item the measurement setting $x_m^{(i)}$ of the observer $i$ is given by the index $k$ enumerating
the subset $S_k^{(i)}$ containing $\ket{\psi_m^{(i)}}$,

\item the measurement outcome $a_m^{(i)}$ corresponds to the position of $\ket{\psi_m^{(i)}}$ in the set $S_k^{(i)}$.
\end{itemize}

Eventually, we simply add the obtained conditional probabilities
and maximize the resulting expression over all classical
correlations, which leads us to the following Bell inequality
\begin{equation}\label{BellIneq}
\sum_{m=1}^{|S|}P({\bf a}_m|{\bf x}_m)\leq 1.
\end{equation}
The value of the right-hand side of the above, the so-called
classical bound, directly follows from the orthogonality of
elements of $S$. Since the latter are product, for each pair of
vectors $\ket{\Psi_m}$, $\ket{\Psi_n}$ $(m\neq n)$, there exists
site, say $i$, such that
$\ket{\psi_m^{(i)}}\perp\ket{\psi^{(i)}_n}$, and so the local
independence property says implies
$\ket{\psi_m^{(i)}},\ket{\psi^{(i)}_n}$ are distinct elements of
the same local subset $S^{(i)}_k$. Consequently, the associated
conditional probabilities $P({\bf a}_m|{\bf x}_m)$ and $P({\bf
a}_n|{\bf x}_n)$ have at site $i$ the same measurement settings
but different outcomes. This means that for any deterministic
local model, if one of these two probabilities is one, the other
one equals zero. Let us further call such probabilities
orthogonal. Since the above holds for any pair of conditional
probabilities, the right-hand side of (\ref{BellIneq}) clearly
amounts to one.

Notice then that, in principle, we can consider a more general
inequality by combining the conditional probabilities $P({\bf
a}_m|{\bf x}_m)$ $(m=1,\ldots,|S|)$ with arbitrary positive
weights $q_m$. However, we always get
in this way a Bell inequality which is weaker that the one above
and certainly cannot be tight (see below).

\subsubsection{Properties}

Let us now shortly characterize the obtained Bell inequalities
(\ref{BellIneq}). We collect their most important properties in
the following theorem \cite{BellUPB1,BellUPB2}.

\begin{theorem}\label{theorem}
Let $S$ be a set of orthogonal product vectors from $H$ having the
local independence property. Then the following implications are
true:
\begin{description}
    \item[(i)] the associated Bell inequality (\ref{BellIneq}) is
    not violated by quantum correlations\\

    \item[(ii)] if $S$ is a UPB in $H$, then the Bell inequality
    (\ref{BellIneq}) is nontrivial in the sense that it is
    violated by some nonsignalling correlations,\\

    \item[(iii)] if $S$ is a full basis in $H$ or can be completed to
    one in such a way that it maintains the local independence
    property, the associated Bell inequality (\ref{BellIneq}) is
not violated by any nonsignalling correlations.
\end{description}
\end{theorem}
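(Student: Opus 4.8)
The plan is to treat the three implications separately, leaning on the same orthogonality phenomenon that already settled the GYNI inequality \eqref{3player}. For (i) I would convert the hypothesis into a single operator inequality. To each $\ket{\Psi_m}$ the construction attaches the product operator $M_m=M_{a_m^{(1)}}^{x_m^{(1)}}\otimes\cdots\otimes M_{a_m^{(N)}}^{x_m^{(N)}}$, and since the quantum value may always be computed with projective measurements (the local Hilbert spaces being unrestricted), I take each $M_a^x$ to be a projector, so that each $M_m$ is a projector too. The local independence property furnishes, for every $m\neq n$, a site $i$ with $\ket{\psi_m^{(i)}}\perp\ket{\psi_n^{(i)}}$ and both vectors lying in the \emph{same} subset $S_k^{(i)}$; hence $x_m^{(i)}=x_n^{(i)}$ but $a_m^{(i)}\neq a_n^{(i)}$, the two local projectors are orthogonal, and $M_mM_n=0$. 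Thus the $M_m$ are mutually orthogonal projectors, $\sum_m M_m\leq\one$, and $\sum_m P(\mathbf a_m|\mathbf x_m)=\tr\!\big(\rho\sum_m M_m\big)\leq\tr\rho=1$ for every state $\rho$, which is (i).

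For (ii) I would exhibit a violating no-signalling distribution by measuring the UPB witness, exactly as in the three-qubit example \eqref{witness}. Let $\Pi=\sum_m\proj{\Psi_m}$ project onto $\mathrm{span}\,S$, and put $\epsilon=\min\,\bra{\phi}\Pi\ket{\phi}$ over product vectors $\ket{\phi}$; the minimum is attained by compactness and is strictly positive \emph{because} $S$ is a UPB (Definition~\ref{def:upb}), no product vector being orthogonal to $\mathrm{span}\,S$. Define the normalized witness $W=(\Pi-\epsilon\one)/(|S|-\epsilon\dim H)$, which is positive on product states by the choice of $\epsilon$ and has $\tr W=1$. Measuring $W$ along local bases extending the subsets $S_k^{(i)}$ yields a Gleason correlation, hence no-signalling with non-negative entries. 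The key computation, using $\Pi\proj{\Psi_m}=\proj{\Psi_m}$, is that the Bell functional takes the value $\sum_m\tr(W\proj{\Psi_m})=\tr(W\Pi)=|S|(1-\epsilon)/(|S|-\epsilon\dim H)$, which exceeds $1$ exactly when $\epsilon>0$ and $|S|<\dim H$, both guaranteed by unextendibility. I would also note the elementary bound $\epsilon<|S|/\dim H$, obtained by averaging $\bra{\phi}\Pi\ket{\phi}$ over a product orthonormal basis containing one $\ket{\Psi_m}$, to keep the denominator positive.

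For (iii) I would first reduce the completable case to the genuine full-basis case: if $S$ extends to a full product basis $S'$ keeping the local independence property, then each term of \eqref{BellIneq} for $S$ reappears among the terms for $S'$ (up to a harmless relabelling of outcomes), while the extra terms are non-negative, so the no-signalling maximum for $S$ is at most that for $S'$. It then suffices to treat $|S|=\dim H$. Here I would prove $\sum_m P(\mathbf a_m|\mathbf x_m)\leq1$ by iterating the no-signalling conditions \eqref{mnscorr}, in the spirit of the bound \eqref{upperns} and the completely-uniform argument, but now using completeness to kill the factor-of-two overcounting. Following the tree structure of Fig.~\ref{fig:tree}, I would peel the parties off one branch at a time: at each branch the local outcomes being summed run over a complete orthonormal set of the relevant subspace, so a no-signalling marginalization collapses a group of terms into a single marginal, and completeness forces the telescoping to terminate exactly at the trivial marginal, bounded by $1$.

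The main obstacle is precisely this last telescoping. Whereas (i) is closed by one operator inequality, the no-signalling bound is a statement about a polytope and must be certified by an explicit combination of normalization and no-signalling equalities (a dual LP certificate). The delicate point is that a full product basis with the local independence property need \emph{not} have each subset $S_k^{(i)}$ spanning $\mathbbm{C}^{d_i}$, so the marginalizations cannot be carried out site-by-site in a fixed order; they must follow the hierarchical branching of the construction, and one must check that completeness makes every branch terminate in a complete local measurement. Showing that this bookkeeping always closes to $1$ — equivalently, that no residual cyclic structure survives to reproduce the UPB violation of part (ii) — is the crux of the argument.
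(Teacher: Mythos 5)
Your parts (i) and (ii) are correct and coincide with the paper's own arguments: (i) is the same mutual-orthogonality-of-projectors argument (the paper runs it as a contradiction, noting that the Bell operator $B=\sum_m P_m$ is itself a projector, so $\tr(B\varrho)\leq 1$), and (ii) is the same witness construction; you even use the correct normalization $|S|-\epsilon\dim H$ (Eq.~(\ref{witness3}) of the paper contains a typo, writing $|S|-\dim H$) and you supply a proof of the bound $\epsilon<|S|/\dim H$, which the paper only asserts in the proof of Theorem~\ref{theorem2}.

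Part (iii), however, is a genuine gap, and you say so yourself: the ``telescoping'' of no-signalling marginalizations is never carried out, and the obstruction you identify (the local subsets $S^{(i)}_k$ of a completed basis need not span $\mathbbm{C}^{d_i}$, so there is no fixed site order in which marginals close up) is exactly why that route is hard. The paper does not attempt it. After the same reduction you make (the extra non-negative terms bound the $S$-inequality by the full-basis inequality), the paper closes the full-basis case with a single observation: the full-basis Bell expression is \emph{saturated} by the uniform distribution $P(\mathbf{a}|\mathbf{x})=1/\dim H$, which is an \emph{interior} point of the classical polytope. A linear functional that attains its maximum over a polytope at a relative-interior point must be constant on the whole polytope, hence equal to $1$ on every local deterministic vertex, and therefore equal to $1$ on every affine combination of those vertices; since every nonsignalling distribution lies in the affine hull of the local polytope (both polytopes are cut out by the same normalization and no-signalling equalities), the functional equals $1$ on the entire nonsignalling set, and (iii) follows with no dual certificate and no sequential marginalization. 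This interior-point/affine-hull idea is the missing ingredient in your proposal. (Incidentally, your worry about non-spanning subsets is not entirely dissolved by the paper's argument either: for $P(\mathbf{a}|\mathbf{x})=1/\dim H$ to be a normalized distribution one needs $\prod_i |S^{(i)}_{x_i}|=\dim H$ for every setting tuple, which is automatic in the multiqubit case but deserves a check for larger local dimensions; still, once a saturating relative-interior point is exhibited, the conclusion is immediate.)
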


\begin{proof}(i): Let us, in contrary, assume that indeed the Bell
inequality (\ref{BellIneq}) associated to $S$ is violated by a
quantum state $\varrho$. Then, there exist local measurement
operators and the resulting Bell operator, denoted $B$, such that
$\tr(B\varrho)>1$. This means that at least one of the eigenvalues
of $B$ has to exceed one. On the other hand, it is clear that the
local measurement operators can be assumed to be projective; if
$\varrho$ violates (\ref{BellIneq}) with POVM, one is able to find
another quantum state $\varrho'$ acting on a larger Hilbert space
violating the same Bell inequality with projective measurements.

Let then $P_m=\otimes_{i=1}^{N}P_m^{(i)}$ denote a product
projective measurement operator corresponding to $P({\bf a}_m|{\bf
x}_m)$, which, in general, may be different from the corresponding
vectors $\ket{\Psi_m}\in S$. Clearly, orthogonality of the
conditional probabilities $P({\bf a}_m|{\bf x}_m)$ is translated
to the orthogonality of the corresponding $P_m$. Precisely, as
already stated, any pair of probabilities $P({\bf a}_m|{\bf x}_m)$
and $P({\bf a}_n|{\bf x}_n)$ has at some site, say $i$, the same
settings but different outcomes, implying that $P_{m}^{(i)}\perp
P_{n}^{(i)}$ and hence $P_m\perp P_n$. As a result all $P_m$
$(m=1,\ldots,|S|)$ are orthogonal and the Bell operator $B=\sum_m
P_m$ is again a projector contradicting the fact that for some
$\varrho$, $\tr(B\varrho)>1$.

(ii): Our proof is constructive, that is, for any Bell inequality
associated to a UPB we will provide particular NC violating it.
We denote by $\Pi$ the projector onto $\mathrm{span} S$,
and introduce, in a full analogy to (\ref{witness}), the following indecomposable witness
\begin{equation}\label{witness3}
W=\frac{1}{|S|-\dim H}(\Pi-\epsilon \mathbbm{1} )
\end{equation}
with $\epsilon$ being a positive number defined as
\begin{equation}\label{eps}
\epsilon=\min\langle x_1,\ldots,x_N|\Pi|x_1,\ldots,x_N\rangle,
\end{equation}
where minimum is taken over all fully product vectors from $H$.
One directly checks that this witness detects entanglement of the
state (\ref{BE}) constructed from the UPB $S$, i.e.,
$\mathrm{tr}(W\varrho)<0$. This, after substituting the exact (\ref{BellIneq})
of $\varrho$, can be rewritten as
\begin{equation}
\mathrm{tr}(W\Pi)>1.
\end{equation}
Clearly, $\Pi$ can be seen as a Bell operator corresponding to our
Bell inequality associated to $S$. To complete
the proof of (ii) it suffices then to notice any local
measurements performed on any entanglement witness, in particular
(\ref{witness3}), give nonsignalling correlations (see e.g.
\cite{Universal,Barnum}).

(iii): Let us start from the case when $|S|<\dim H$ and assume
that $S$ can be completed to a basis of $H$ maintaining the local
independence property (if $H=(\mathbbm{C}^2)^{\ot N}$ one can
always do that provided $S$ is completable). Let then
$\ket{\Psi_m}$ $(m=|S|+1,\ldots,\dim H)$ denote product orthogonal
vectors completing $S$, i.e.,
$\mathrm{span}(S\cup\{\ket{\Psi_m}\}_{m})=H$. Consequently, one
can associate a Bell inequality (\ref{BellIneq}) to the set $S$
and conditional probabilities $P({\bf a}_m|{\bf x}_m)$ to the new
vectors $\ket{\Psi_m}$ $(m=|S|+1,\ldots,\dim H)$ in an unambiguous
way. Then
\begin{equation}
\sum_{m=1}^{|S|}P({\bf a}_m|{\bf x}_m)\leq \sum_{m=1}^{\dim
H}P({\bf a}_m|{\bf x}_m)\leq 1,
\end{equation}
meaning that it suffices to prove that the Bell inequality
appearing on the right-hand side (the one constructed from a full
basis in $H$) is trivial. For this purpose, we note
that the latter is saturated by the uniform probability
distribution $P({\bf a}|{\bf x})=1/\dim H$ for any ${\bf a}$ and
${\bf x}$, which is an interior point of the corresponding
polytope of classical correlations. Consequently, this Bell
inequality is saturated by all vertices of the polytope,
and hence by any affine combination thereof, in particular, all
nonsignalling correlations. $\blacksquare$
\end{proof}

It is illuminating to see how the properties of $S$ determine the
properties of the associated Bell inequality. Orthogonality of
elements of $S$ implies that it lacks quantum violation. If $S$ is
additionally a UPB, then the Bell inequality is nontrivial because
it detects some nonsignalling correlations. On the other hand,
pit is trivial if $S$ is a full basis in $H$ or can be
completed to one maintaining the local independence property. In
the case of $H=(\mathbbm{C}^{2})^{\ot N}$, up to sets that can
only be completed to UPBs, the implication (iii) becomes
equivalence \cite{BellUPB2}. In the higher-dimensional case,
however, there are sets having local independence property which
are not UPBs but cannot be extended maintaining the local
independence property (see Sec. \ref{sec:further}).

The more important and interesting question concerns the tightness
of these Bell inequalities. As shown in Refs.
\cite{BellUPB1,BellUPB2} there exist example of both tight and
nontight Bell inequalities associated to UPBs (see Sec. \ref{sec:examples}
for examples) and, so far, it remains unclear what decides on
tightness.

\subsubsection{Examples}
\label{sec:examples}

Just to get a better insight into the construction let us apply to
it to some particular examples of sets $S$, in particular those
presented in Sec. \ref{UPB}.

\begin{example}Using the already exploited relation
between the GYNI Bell inequality (\ref{3player}) and Shifts UPB let us show how
the above construction works in practice. As already noticed,
$U_{\mathrm{Shifts}}$ has two different bases at each site
$S_0=\{\ket{0},\ket{1}\}$ and
$S_1=\{\ket{e},\ket{\overline{e}}\}$. The vector
$\ket{e}\in\mathbbm{C}^2$ is, by assumption, different than
$\ket{0}$ and $\ket{1}$, and hence $U_{\mathrm{Shifts}}$ has the
local independence property. We then associate a conditional
probability to every vector in $U_{\mathrm{Shifts}}$:
\begin{eqnarray}
  &\ket{000} \mapsto P(000|000), \qquad \qquad
  \ket{1\overline{e}e} \mapsto P(110|011),& \nonumber\\
&  \ket{e1\overline{e}} \mapsto P(011|101),\qquad \qquad
  \ket{\overline{e}e1} \mapsto P(101|110).&
\end{eqnarray}
Simply by adding the above probabilities we get (\ref{3player}).
In exactly the same was one shows that GYNI$_N$ can be associated
to a certain $N$-qubit UPB \cite{BellUPB1,BellUPB2}. Moreover, it
was recently shown in Ref. \cite{BellUPB2} that GYNI$_N$ is tight
for odd $N$.

Interestingly, the GYNI$_3$ is the only tight three-partite Bell
inequality with no quantum violation in the scenario of two
dichotomic measurements per site, and it is associated to the only
class of UPB in $(\mathbbm{C}^2)^{\ot 3}$ characterized in Ref.
\cite{Bravyi}.

\end{example}

\begin{example}Second, let us consider the Generalized Shifts UPB
(\ref{GenShifts}). The corresponding Hilbert space is
$H=(\mathbbm{C}^{2})^{\ot N}$ with $N=2k-1$ for integer $k\geq 2$.
Following the above rules, at each site one can define $k$ local
subsets $S_0=\{\ket{0},\ket{1}\}$ and
$S_i=\{\ket{e_i},\ket{\overline{e}_i}\}$ $(i=1,\ldots,k-1)$, which
will later define $k$ observables. We then associate a conditional
probability to every element of $U_{\mathrm{GenShifts}}$:
\begin{eqnarray}
\ket{0\ldots0}&\mapsto& P(0\ldots0|0\ldots 0)\nonumber\\
\ket{1e_1\ldots e_{k-1}\overline{e}_{k-1}\ldots \overline{e}_{1}}&\mapsto& P(10\ldots01\ldots 1|01\ldots k-1,k-1\ldots 1)\nonumber\\
&\vdots&\nonumber\\
\ket{e_1\ldots e_{k-1}\overline{e}_{k-11}\ldots
\overline{e}_{1}}&\mapsto& P(0\ldots01\ldots 11|01\ldots
k-1,k-1\ldots 11).
\end{eqnarray}
Summing all these probabilities up, we get the $N$-partite Bell
inequality with odd $N$:
\begin{equation}\label{BellIneq2}
P(0\ldots 0|0\ldots 0)+\sum_{i=1}^{2k-1}D^{i}P(10\ldots01\ldots
1|01\ldots k-1,k-1\ldots 1)\leq 1,
\end{equation}
where $D$ denotes an operation shifting the input and output
vectors by one to the right, i.e.,
$D(x_1,\ldots,x_N)=(x_N,x_1,\ldots,x_{N-1})$. Notice that since at
each site one has $k$ two-element local subsets $S_i$, the Bell
inequality (\ref{BellIneq2}) corresponds to the scenario with $k$
dichotomic observables per site.

Due to theorem \ref{theorem}, all the Bell inequalities
(\ref{BellIneq2}) are nontrivial. However, it is unclear whether
they are tight. For $N=3$ the above class recovers the GYNI$_3$
which is tight, while already for $N=5$ the corresponding Bell
inequality is not tight.
\end{example}

\begin{example}Consider now the class of UPBs provided in Ref.
\cite{NisetCerf}, i.e., $U_{\mathrm{NC}}$ presented in example 3.
Here $H=(\mathbbm{C}^d)^{\ot N}$ with $d\geq N-1$. From Eqs.
(\ref{NC1}) and (\ref{NC2}) it follows that at each site one can
distinguish two local subsets $S_{0}=\{\ket{i}\}_{i=0}^{d-1}$,
i.e., the standard basis, and $S_1=\{\ket{e_i}\}_{i=0}^{d-1}$.
Since the elements of $U_{\mathrm{NC}}$ are orthogonal
irrespectively of the choice of the second basis, $U_{NC}$ has the
local independence property. Associating conditional probabilities
to elements of $U_{\mathrm{NC}}$ and summing them up, one gets the
$N$-partite Bell inequality:
\begin{eqnarray}\label{BellIneq3}
P(d-1,\ldots,d-1|1,\ldots,1)+\sum_{i=0}^{N-1}\sum_{j=0}^{d-2}D^{i}P(0,1,\ldots,d-1,j|0,\ldots,0,1)\leq
1,
\end{eqnarray}
where $D$ is defined as before and $D^{0}$ is an identity.

Theorem \ref{theorem} says that all the Bell inequalities
(\ref{BellIneq3}) are nontrivial, however, it is not clear
whether they are tight in general. For $N=3$ and $d=2$, this class
gives GYNI$_3$, but for $N=4$ and $d=3$ one checks that the
resulting Bell inequality is not tight.

\end{example}

Let us notice that within the above framework one can also obtain
tight Bell inequalities with no quantum violation from UPBs that
are independent of GYNI$_N$. Some new examples as for instance the
following four-partite Bell inequality
\begin{eqnarray}
p(0000|0000) + p(1000|0111) + p(0110|1012) +
p(0001|0110)\nonumber\\ + p(1011|0001) + p(1101|0102) +
p(1110|1101) \leq 1
\end{eqnarray}
were found recently in Ref. \cite{BellUPB2}.

\subsection{Further generalizations}
\label{sec:further}

We will show now that not only UPBs lead to nontrivial Bell
inequalities with no quantum violation. If at least one $d_i$ in
$H$ is larger than two, there exist sets of orthogonal product
vectors that are not UPBs in the sense of definition
\ref{def:upb}, but still, the associated Bell inequalities
(\ref{BellIneq}), via the rules from Sec. \ref{sec:constr}, lack
quantum violation and are violated by nonsignalling correlations.

To be more precise, let us consider again set of orthogonal product vectors
$S$ and let us split local sets $S^{(i)}$ [cf. Eq. (\ref{Si})] into
subsets $S^{(i)}_k$ following the same rules as before. Then, we have the definition.

\begin{definition}\label{def:wupb}
Let $S$ be a set of orthogonal fully product vectors from $H$
having the local independence property. Then, if $|S|<\dim H$ and
there does not exist a product vector
$\ot_{i=1}^{N}\ket{\phi_i}\in H$ with $\ket{\phi_i}\in S^{(i)}$
which is orthogonal to all vectors from $S$, we call $S$ a {\it
weak unextendible product basis} (wUPB).
\end{definition}

Clearly, any UPB is also a wUPB. Also, if all $d_i=2$ in eq. (\ref{HilbertSpace}),
these two notions are equivalent. If, however, at least one of the local
dimensions $d_i$ is larger than two, there exist wUPB that are not UPB. As a particular example
consider the following.

\begin{example}Consider the following set of vectors from $H=\mathbbm{C}^2\ot\mathbbm{C}^2\ot\mathbbm{C}^3$:
\begin{equation}\label{setwUPB}
S=\{\ket{000},\ket{1\overline{e}f},\ket{e1\overline{f}},\ket{\overline{e}e1},\ket{\overline{e}e2},\ket{e1\hat{f}}\},
\end{equation}
where $\ket{f},\ket{\overline{f}}$, and $\ket{\hat{f}}$ are three orthogonal vectors from $\mathbbm{C}^3$.
At the first two sites one distinguishes two local sets $S_0^{(1)}=S_0^{(2)}=\{\ket{0},\ket{1}\}$
and $S_1^{(1)}=S_1^{(2)}=\{\ket{e},\ket{\overline{e}}\}$, while at the third site
$S^{(3)}_0=\{\ket{0},\ket{1},\ket{2}\}$ and $S^{(3)}_1=\{\ket{f},\ket{\overline{f}},\ket{\hat{f}}\}$.

The set $S$ has the local independence property because irrespectively of the choice
of all these subsets, all its elements are orthogonal. However, it is clearly not a
UPB because $\ket{e0g}$ and $\ket{\overline{e}\overline{e}g}$ with
$\mathbbm{C}^3\ni\ket{g}\perp\ket{0},\ket{f}$ are orthogonal to $S$. Still, $S$ is
a wUPB; there is no product vector $\ket{\phi_1}\ket{\phi_2}\ket{\phi_3}\in\mathbbm{C}^2\ot\mathbbm{C}^2\ot\mathbbm{C}^3$
with $\ket{\phi_i}\in S_j^{(i)}$ $(i=1,2,3;j=1,2)$, which is orthogonal to $S$.

\end{example}

It remains an open question as to whether the quantum state constructed from
a wUPB, i.e., the state (\ref{BE}) with $\Pi$ denoting now a projector
onto the subspace spanned by the wUPB, is entangled. It is, nevertheless, still a PPT
state.

Following the rules given in Sec. \ref{sec:constr}, any wUPB can be
associated a Bell inequality (\ref{BellIneq}) with no quantum violation
which is violated by some nonsignalling correlations. In fact,
we have the following theorem.

\begin{theorem}\label{theorem2}
If $S$ is a wUPB, the associated Bell inequality
(\ref{BellIneq}) is violated by some nonsignalling correlations.
\end{theorem}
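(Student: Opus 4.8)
The plan is to mimic the constructive proof of part~(ii) of Theorem~\ref{theorem} almost verbatim; the whole novelty lies in how the positive number $\epsilon$ is defined and in checking that it stays strictly positive under the weaker hypothesis. Recall that for a genuine UPB the violation was produced by measuring, in the local bases that fix the subsets $S^{(i)}_k$, the (trace-one) witness
\[
W=\frac{1}{|S|-\epsilon\dim H}\bigl(\Pi-\epsilon\one\bigr),\qquad \Pi=\sum_{m=1}^{|S|}\proj{\Psi_m},
\]
the analogue of \eqref{witness3}, and that unextendibility was used \emph{only} to guarantee $\epsilon>0$. First I would keep this very $W$, but now define
\[
\epsilon=\min\,\bra{\phi_1,\ldots,\phi_N}\Pi\ket{\phi_1,\ldots,\phi_N},
\]
where the minimum is taken over the (finitely many) product vectors with $\ket{\phi_i}\in S^{(i)}$ for every $i$ --- that is, over exactly the product states that the measurements of the construction ever probe.

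The key step is then $\epsilon>0$. Since each matrix element $\bra{\phi_1,\ldots,\phi_N}\Pi\ket{\phi_1,\ldots,\phi_N}$ vanishes iff $\ket{\phi_1,\ldots,\phi_N}$ is orthogonal to every $\ket{\Psi_m}$, and Definition~\ref{def:wupb} asserts precisely that no product vector with local factors $\ket{\phi_i}\in S^{(i)}$ is orthogonal to all of $S$, each of these finitely many elements is strictly positive, and hence so is their minimum. This is exactly where the wUPB hypothesis does the job that full unextendibility did in part~(ii): there positivity was needed over all product vectors, here only over those whose local factors lie in the sets $S^{(i)}$, which are all the associated measurements can see.

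Granting $\epsilon>0$, the rest is the computation of part~(ii). For each $\ket{\Psi_m}\in S$ the probability attached to the $m$-th term of \eqref{BellIneq} is $\bra{\Psi_m}W\ket{\Psi_m}=(1-\epsilon)/(|S|-\epsilon\dim H)$, whence
\[
\sum_{m=1}^{|S|}P(\mathbf a_m|\mathbf x_m)=|S|\,\frac{1-\epsilon}{|S|-\epsilon\dim H}>1,
\]
the inequality following from $\dim H>|S|$ (built into Definition~\ref{def:wupb}) together with $0<\epsilon<|S|/\dim H$; the upper bound on $\epsilon$ holds because the probed matrix elements average to $\tr(\Pi)/\dim H=|S|/\dim H$ over a product basis, so their minimum cannot exceed it (and is strictly below it away from the degenerate case $\Pi\propto\one$ on the probed subspace). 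Being of the form $P(\mathbf a|\mathbf x)=\tr\!\bigl(W\,M^{x_1}_{a_1}\ot\cdots\ot M^{x_N}_{a_N}\bigr)$ with local POVMs, these correlations are normalized ($\tr W=1$) and nonsignalling automatically.

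The main obstacle, and the only place where this is genuinely harder than part~(ii), is \emph{positivity of the probabilities}. For a true UPB, $W$ is positive on \emph{all} product states, so any local measurement yields nonnegative probabilities; for a wUPB this may fail, since $W$ can be negative on product vectors whose local factors lie outside the $S^{(i)}$. Hence I cannot invoke ``measuring a witness is automatically fine'' and must instead verify each outcome probability by hand. When every local subset $S^{(i)}_k$ spans $\mathbbm{C}^{d_i}$ --- which covers all multiqubit sets and the $\mathbbm{C}^2\!\ot\!\mathbbm{C}^2\!\ot\!\mathbbm{C}^3$ example above --- the measurements are complete projective measurements whose eigenvectors all lie in $S^{(i)}$, so positivity is immediate from the definition of $\epsilon$. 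In the general case one must complete each non-spanning $S^{(i)}_k$ by a garbage outcome and check that the corresponding probabilities remain nonnegative; establishing this last point (or arguing that one may always reduce to spanning subsets) is the crux of a fully general proof.
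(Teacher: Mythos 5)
Your construction is exactly the paper's proof of Theorem~\ref{theorem2}: the same operator as in \eqref{witness3} (note that the paper's prefactor $1/(|S|-\dim H)$ there is a typo for $1/(|S|-\epsilon\dim H)$, cf.~\eqref{witness}, so your normalization is the intended one), the same redefinition of $\epsilon$ in \eqref{eps} as a minimum over product vectors with local factors in $S^{(i)}$, the same use of Definition~\ref{def:wupb} to get $\epsilon>0$, and the same final value $|S|(1-\epsilon)/(|S|-\epsilon\dim H)>1$.

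Where you diverge is only in the two points you flag, and both are places where the paper is less careful than you are. First, the bound $\epsilon<|S|/\dim H$ is simply asserted in the paper (``due to the fact that\ldots''); your averaging argument proves it whenever each $S^{(i)}$ contains an orthonormal basis of $\mathbbm{C}^{d_i}$, and some hypothesis of this kind is genuinely needed, since the bound does not follow from Definition~\ref{def:wupb} alone. Second, positivity of all outcome probabilities is not addressed in the paper's proof at all: the paper instead remarks, after the proof, that $W$ is no longer an entanglement witness but is still Hermitian and ``therefore represents nonsignalling correlations \cite{Universal}''. That appeal is legitimate only if the finitely many numbers $\tr\big(W M^{x_1}_{a_1}\ot\cdots\ot M^{x_N}_{a_N}\big)$ arising in the scenario are nonnegative --- a Hermitian unit-trace operator automatically yields normalized, nonsignalling numbers, but not automatically nonnegative ones --- and verifying this is precisely the check you carry out. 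So the ``crux'' you leave open (completion outcomes when some $S^{(i)}_k$ does not span $\mathbbm{C}^{d_i}$) is a loose end of the published proof as much as of yours; in the paper's own wUPB example \eqref{setwUPB} every local subset spans its local space, so the case you do prove completely covers everything the paper actually verifies. In short: correct, same route, and your caveat identifies a real gap in the paper's argument rather than a defect of your own.
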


\begin{proof}The proof goes along the same lines as the one of (ii) of
theorem \ref{theorem}. It suffices to consider the same
operator as in Eq. (\ref{witness3}) with $\Pi$ denoting now a projector
onto the subspace spanned by the wUPB $S$
and the minimum in Eq. (\ref{eps}) taken over product vectors $\ot_{i=1}^N\ket{\phi_i}\in H$
with local vectors $\ket{\phi_i}\in S^{(i)}$.

A measurements of such $W$ along the settings corresponding to the
local sets $S_{k}^{(i)}$ produce the value of the Bell inequality
(\ref{BellIneq}) constructed from the wUPB $S$ given by
$|S|(1-\epsilon)/(|S|-\epsilon\dim H)$. This, due to the fact that
$|S|<\dim H$ and $\epsilon<|S|/\dim H$, is always larger than one.
$\blacksquare$
\end{proof}

Notice that if $S$ is a wUPB but not UPB, the operator $W$ is no longer
an entanglement witness, but still it is a Hermitian operator. It therefore
represents nonsignalling correlations \cite{Universal}, which are
not Gleason correlations. It is then an open question as to whether
Bell inequalities associated to wUPBs "detect" Gleason correlations.

To conclude, let us notice that the Bell inequality corresponding to
the set (\ref{setwUPB}):
\begin{eqnarray}
&&p(000|000)+p(110|011)+p(011|101)+p(101|110)\nonumber\\
&&+p(012|101)+p(102|110)\leq 1,
\end{eqnarray}
which has two three-outcome observables at the third site, is tight.
This is because it is lifted three-partite GYNI Bell inequality (\ref{3player}) \cite{Stefano}.

\section{Conclusions}
\label{sec:conclusion}

'Guess you neighbour's input' is a multipartite nonlocal game
that, despite its simplicity, captures important features of
multipartite correlations. Moreover, it has unexpected connections
to topics in quantum foundations and quantum information theory.
In particular, it shows that the natural multipartite
generalization of Gleason's Theorem fails for more than two
parties, that intrinsically multipartite principles are needed to
characterize quantum correlations and that there exists a link
between unextendible orthogonal product bases and Bell
inequalities with no quantum violation.

From a speculative point of view, GYNI suggests that we are
lacking an intrinsically multipartite principle in our
understanding of correlations. Indeed, the most interesting
feature of the game is that it represents a multipartite
strengthening of the no-signaling principle, which is by
construction a bipartite principle, that is obeyed by quantum
correlations. This naturally raises the question of what physical
or information-theoretic principles lie behind GYNI. We hope our
work stimulates further research in this direction.

\begin{acknowledgement}
Discussions with T. Fritz are acknowledged. This work was
supported by the ERC starting grant PERCENT, the EU AQUTE and QCS
projects, the Spanish CHIST-ERA DIQIP, FIS2008-00784 and
FIS2010-14830 projects, and the UK EPSRC. R. A. is supported by
the Spanish MINCIN through the Juan de la Cierva program.
\end{acknowledgement}


\begin{thebibliography}{}

\bibitem{hardy}L. Hardy, quant-ph/0101012v4.

\bibitem{bell}J. S. Bell, Physics {\bf 1}, 195 (1964).

\bibitem{ekert}A.~K. Ekert, Phys. Rev. Lett. {\bf 67},  661  (1991).

\bibitem{CC}H. Buhrman, R. Cleve, S. Massar and R. de Wolf, Rev. Mod. Phys. {\bf 82}, 665 (2010).

\bibitem{diqkd}
J. Barrett L. Hardy and A. Kent, Phys. Rev. Lett. \textbf{95},
010503 (2005); A. Ac\'in \textit{et al.}, Phys. Rev. Lett.
\textbf{98}, 230501 (2007); L. Masanes, S. Pironio and A. Ac\'in,
Nature Comm. \textbf{2}, 238 (2011).

\bibitem{dirng}
S. Pironio {\it et al.}, Nature \textbf{464}, 1021 (2010); R.
Colbeck, PhD Thesis, University of Cambridge; R. Colbeck and A.
Kent, J. Phys. A: Math. and Theor. \textbf{44} (9), 095305 (2011).


\bibitem{PR}S. Popescu and R. Rohrlich, Found. Phys. {\bf 24}, 379 (1994).

\bibitem{brassardNP}G. Brassard, Nat. Phys. {\bf 1}, 2 (2005).

\bibitem{popescuNP}S. Popescu, Nat. Phys. {\bf 2}, 507 (2006).

\bibitem{bub}R. Clifton, J. Bub and H. Halvorson, Found. Phys. {\bf 33}, 1561 (2003).

\bibitem{Dam2005}
W. van Dam, \textit{Nonlocality \& Communication complexity}, Ph.D. thesis, University of Oxford (2000); see also quant-ph/0501159v1.

\bibitem{Brassard2006} G. Brassard, H. Buhrman, N. Linden, A. A. M\'ethot, A. Tapp, and F. Unger,
Phys. Rev. Lett. {\bf 96}, 250401 (2006).

\bibitem{Brunner2009} N. Brunner and P. Skrzypczyk, Phys. Rev. Lett. {\bf 102}, 160403 (2009).

\bibitem{Pawlowski2009a}M. Paw\l{}owski, T. Paterek, D. Kaszlikowski, V. Scarani, A. Winter, and M. \.Zukowski,
Nature {\bf 461}, 1101 (2009).

\bibitem{Allcock2009}
J. Allcock, N. Brunner, M. Paw\l{}owski, and V. Scarani, Phys. Rev. A {\bf 80}, 040103(R) (2009).

\bibitem{ML}
M. Navascues and H. Wunderlich, Proc. Roy. Soc. Lond. A {\bf466}, 881 (2009).

\bibitem{Barnum}H. Barnum, S. Beigi, S. Boixo, M. B. Elliott, and S. Wehner, Phys. Rev. Lett. {\bf 104}, 140401
(2010).

\bibitem{Universal}A. Ac\'in, R. Augusiak, D. Cavalcanti, C. Hadley, J. K. Korbicz, M. Lewenstein, and M. Piani, Phys. Rev. Lett. {\bf 104}, 140404 (2010).

\bibitem{GYNI}M. L. Almeida, J.-D. Bancal, N. Brunner, A. Ac\'in, N. Gisin, and S. Pironio,
Phys. Rev. Lett. {\bf 104}, 230404 (2010).


\bibitem{andreas} A. Winter, Nature {\bf 466}, 1053 (2010).

\bibitem{gwan} R. Gallego, L. W\"urflinger, A. Ac\'\i n and M. Navascu\'es, Phys. Rev. Lett. \textbf{107}, 210403
(2011).

\bibitem{singapore} T. H. Yang, D. Cavalcanti, M. Almeida, C. Teo and V. Scarani, New J. Phys. \textbf{14}, 013061 (2012).




\bibitem{gleason}
A. Gleason, J. Math. Mech. \textbf{6}, 885 (1957).

\bibitem{BellUPB1}R. Augusiak, J. Stasi\'nska, C. Hadley, J. K. Korbicz, M. Lewenstein, and A. Ac\'in,
Phys. Rev. Lett. {\bf 107}, 070401 (2011).

\bibitem{BellUPB2}R. Augusiak, T. Fritz, M. Kotowski, M. Kotowski,
M. Paw\l{}owski, M. Lewenstein, and A. Ac\'in, Phys. Rev. A {\bf 85}, 042113 (2012).

\bibitem{HorodeckiBound}M. Horodecki, P. Horodecki, and R. Horodecki,
Phys. Rev. Lett. {\bf 80}, 5239 (1998).

\bibitem{BennettUPB}C. H. Bennett, D. P. DiVincenzo, T. Mor, P. W. Shor, J. A. Smolin, and B. M. Terhal, Phys. Rev. Lett. {\bf 82}, 5385 (1999).

\bibitem{barrett}
J. Barrett, N. Linden, S. Massar, S. Pironio, S. Popescu, D. Roberts, Phys. Rev. A {\bf 71},  022101  (2005).

\bibitem{Pironio2011}S. Pironio, J.-D. Bancal and V. Scarani, J. Phys. A: Math. Theor. {\bf 44} 065303 (2011).

\bibitem{sliwa}C. \'Sliwa, Phys. Lett. A \textbf{317}, 165 (2003).


\bibitem{Busch}
P. Busch, Phys. Rev. Lett. \textbf{91}, 120403 (2003).

\bibitem{localgleason}
D. Foulis and C. Randall, Interpretations and Foundations of
Quantum Theory \textbf{5}, 920 (1979); M. Kl\"ay, C. Randall and
D. Foulis, Int. J. Theor. Phys. \textbf{26}, 199 (1987); H.
Barnum, C. A. Fuchs, J. M. Renes and A. Wilce,
arXiv:quant-ph/0507108.

\bibitem{wallach}
N. R. Wallach, arXiv:quant-ph/0002058.

\bibitem{PHTerhal}M. Horodecki, P. Horodecki, and R. Horodecki, Phys. Lett. A {\bf 223}, 1 (1996);
B. M. Terhal, Phys. Lett. A {\bf 271}, 319 (2000).


\bibitem{JC}A. Jamio\l{}kowski, Rep. Math. Phys. {\bf 3}, 275 (1972);
M.-D. Choi, Linear Algebra Appl. {\bf 10}, 285 (1975).

\bibitem{HHHContr}M. Horodecki, P. Horodecki, and R. Horodecki,
Open Syst. Inf. Dyn. {\bf 13}, 103 (2006).

\bibitem{Ahanj2010} A. Ahanj, S. Kunkri, A. Rai, R. Rahaman, and P. S. Joag, Phys. Rev. A {\bf 81}, 032103 (2010).

\bibitem{Cavalcanti2010} D. Cavalcanti, A. Salles and V. Scarani, Nat. Comm. {\bf 1}, 136 (2010).














\bibitem{gwan2} R. Gallego, L. W\"urflinger, A. Ac\'\i n and M. Navascu\'es,
arXiv:1112.2647.

\bibitem{tobl2} J. Barrett, S. Pironio, J.-D. Bancal and N. Gisin, arXiv:1112.2626.

\bibitem{UPBhuge}D. P. DiVincenzo, T. Mor, P. W. Shor, J. A. Smolin, and B. M. Terhal,
Commun. Math. Phys. {\bf 238}, 379 (2003).

\bibitem{Leinaas}J. M. Leinaas, J. Myrheim, and P. \O. Sollid, Phys. Rev. A {\bf 81}, 062330
(2010).

\bibitem{Skowronek}\L{}. Skowronek, J. Math. Phys. {\bf 52}, 122202 (2011).

\bibitem{NisetCerf}J. Niset and N. J. Cerf, Phys. Rev. A {\bf 74}, 052103 (2006).

\bibitem{Bravyi}S. B. Bravyi, Quant. Info. Proc. {\bf 3}, 309 (2004).


\bibitem{Stefano}S. Pironio, J. Math. Phys. {\bf 46}, 062112 (2005).



\end{thebibliography}
\end{document}